\newtheorem{lem}{Lemma}[section]
\newtheorem{thm}[lem]{Theorem}
\title{Improved Approximation Algorithms for Minimizing the Total Weighted Completion Time of Coflows}
\author{Chi-Yeh~Chen 
\\ Department of Computer Science and Information
Engineering, \\ National Cheng Kung University, \\
Taiwan, ROC. \\
chency@csie.ncku.edu.tw.}
\begin{document}

\maketitle
\begin{abstract}
This paper addresses the challenging scheduling problem of coflows with release times, with the objective of minimizing the total weighted completion time. Previous literature has predominantly concentrated on establishing the scheduling order of coflows. In advancing this research, we contribute by optimizing performance through the determination of the flow scheduling order. The proposed approximation algorithm achieves approximation ratios of $3$ and $2+\frac{1}{LB}$ for arbitrary and zero release times, respectively, where $LB$ is the minimum lower bound of coflow completion time. To further improve time complexity, we streamline linear programming by employing interval-indexed relaxation, thereby reducing the number of variables. As a result, for $\epsilon>0$, the approximation algorithm achieves approximation ratios of $3 + \epsilon$ and $2 + \epsilon$ for arbitrary and zero release times, respectively. Notably, these advancements surpass the previously best-known approximation ratios of 5 and 4 for arbitrary and zero release times, respectively, as established by Shafiee and Ghaderi.

\begin{keywords}
Scheduling algorithms, approximation algorithms, coflow, datacenter network.
\end{keywords}
\end{abstract}

\section{Introduction}\label{sec:introduction}
Large-scale data centers are a crucial component of cloud computing, and their advantages in application-aware network scheduling~\cite{Chowdhury2014, Chowdhury2015, Zhang2016, Agarwal2018} have been demonstrated, especially for distributed applications with structured traffic patterns such as MapReduce~\cite{Dean2008}, Hadoop~\cite{Shvachko2010, borthakur2007hadoop}, Dryad~\cite{isard2007dryad}, and Spark~\cite{zaharia2010spark}. Data-parallel applications typically alternate between the computing and communication stages. In the computing stage, these applications produce a significant amount of intermediate data (flows) that needs transmission across diverse machines for subsequent further in the communication stage. Data centers must possess robust data transmission and scheduling capabilities to address the massive scale of applications and their corresponding data transmission requirements. The communication patterns between the flows among these two sets of machines within the data center can be abstracted as coflow traffic~\cite{Chowdhury2012}. A coflow refers to a collection of interdependent flows, where the completion time of the entire group relies on the completion time of the last flow within the collection~\cite{shafiee2018improved}. 

In the research on coflow scheduling, it can be categorized into scheduling in electrical packet switches and scheduling in optical circuit switches~\cite{Zhang2021, Tan2021, Li2022}. An optical circuit switch exhibits a significantly higher data transfer rate but consumes less power. However, each ingress or egress port in an optical circuit switch is limited to establishing at most one circuit for data transmission at a time. The process of reconfiguring to establish new circuits in an optical circuit switch takes a fixed period, referred to as the reconfiguration delay~\cite{Zhang2021}. On the other hand, an electrical packet switch allows for the flexible allocation of bandwidth to each link, and flows can preempt one another. This paper addresses the coflow scheduling problem on packet switches, aiming to optimize the scheduling of coflows to minimize the total weighted coflow completion time.

\begin{table*}[!ht]
\caption{Theoretical Results}
\centering
\begin{tabular}{ccccc}
\hline
 Case & \multicolumn{2}{c}{Best known} & \multicolumn{2}{c}{This paper} \\ 
      & deterministic   &   randomized &  deterministic  &  randomized        \\ \hhline{=====}
 Without release times  &   4~\cite{shafiee2018improved}       &    $2e$~\cite{Shafiee2017}      &   $2 + \epsilon$        &  $2 + \epsilon$      \\
 With release times     &   5~\cite{shafiee2018improved}       &    $3e$~\cite{Shafiee2017}     &   $3 + \epsilon$        &   $3 + \epsilon$     \\ 
\hline
\end{tabular}
\label{tab:results}
\end{table*}

\subsection{Related Work}
The concept of coflow abstraction to characterize communication patterns within data centers is firstly introduced by Chowdhury and Stoica~\cite{Chowdhury2012}.
The coflow scheduling problem has been proven to be strongly $\mathcal{NP}$-hard. Since the concurrent open shop problem can be reduced to a coflow scheduling problem, 
it is $\mathcal{NP}$-hard to approximate the coflow scheduling problem within a factor better than $2-\epsilon$ due  to the inapproximability of the concurrent open shop problem~\cite{Bansal2010, Sachdeva2013}.

To minimize the weighted completion time of coflows, Qiu \textit{et al.}~\cite{Qiu2015} pioneered the development of algorithms that partition coflows into disjoint groups. Each group is then treated as a single coflow and assigned to a specific time interval. They obtained a deterministic approximation ratio of $\frac{64}{3}$ and a randomized approximation ratio of $(8+\frac{16\sqrt{2}}{3})$. When coflows are released at arbitrary times, their algorithms achieved a deterministic approximation ratio of $\frac{67}{3}$ and a randomized approximation ratio of $(9+\frac{16\sqrt{2}}{3})$. However, Ahmadi \textit{et al.}\cite{ahmadi2020scheduling} demonstrated that their deterministic approximation ratio is $\frac{76}{3}$ in the deterministic algorithm for coflow scheduling with release time.

Khuller \textit{et al.}~\cite{khuller2016brief} employ an approximate algorithm for the concurrent open shop problem to establish the scheduling order of coflows. This approach yields an approximation ratio of $12$ for coflow scheduling with arbitrary release times and an approximation ratio of $8$ when all coflows have zero release times. Additionally, they achieve a randomized approximation ratio of $3+2\sqrt{2} \approx 5.83$ when all coflows have zero release times. Shafiee and Ghaderi~\cite{Shafiee2017} further improved the randomized approximation ratios to $3e$ and $2e$ when coflows have arbitrary release times or no release times, respectively.

Shafiee and Ghaderi~\cite{shafiee2018improved} proposed the currently known best algorithm, achieving approximation ratios of 5 and 4 for arbitrary and zero release times, respectively. Moreover, Ahmadi \textit{et al.}~\cite{ahmadi2020scheduling} utilized a primal-dual algorithm to reduce time complexity and achieved identical results.

\subsection{Our Contributions}
This paper is dedicated to addressing the coflow scheduling problem and presents a variety of algorithms along with their corresponding results. The specific contributions of this study are outlined below:

\begin{itemize}
\item We present a randomized approximation algorithm with pseudo-polynomial complexity, achieving expected approximation ratios of $3$ and $2$ for the coflow scheduling problem with arbitrary and zero release times, respectively.

\item We present a deterministic approximation algorithm with pseudo-polynomial complexity that achieves approximation ratios of $3$ and $2+\frac{1}{LB}$ for the coflow scheduling problem with arbitrary and zero release times, respectively, where $LB$ is the minimum lower bound of coflow completion time.

\item We present a randomized approximation algorithm with polynomial complexity, achieving expected approximation ratios of $3 + \epsilon$ and $2 + \epsilon$ for the coflow scheduling problem with arbitrary and zero release times, respectively.

\item We present a deterministic approximation algorithm with polynomial complexity that achieves approximation ratios of $3 + \epsilon$ and $2 + \epsilon$ for the coflow scheduling problem with arbitrary and zero release times, respectively.
\end{itemize}

A summary of our theoretical findings is provided in Table~\ref{tab:results}.

\subsection{Organization}
The structure of this paper is outlined as follows. In Section~\ref{sec:Preliminaries}, we provide an introduction to fundamental notations and preliminary concepts that will be employed in subsequent sections. Section~\ref{sec:SOTA} offers an overview of our high-level ideas. The main algorithms are presented in the following sections: Section~\ref{sec:Algorithm1} introduces the randomized algorithm addressing the coflow scheduling problem, while Section~\ref{sec:Algorithm2} details the deterministic algorithm designed for the same problem. Section~\ref{sec:Algorithm3} proposes a randomized algorithm with interval-indexed linear programming relaxation to tackle the coflow scheduling problem, and Section~\ref{sec:Algorithm4} elaborates on the deterministic algorithm for the same. Finally, Section~\ref{sec:Conclusion} summarizes our findings and draws meaningful conclusions.

\section{Notation and Preliminaries}\label{sec:Preliminaries}
This paper considers a large-scale non-blocking switch with dimensions $N \times N$, where $N$ input links are connected to $N$ source servers, and $N$ output links are connected to $N$ destination servers. Let $\mathcal{I}=\left\{1,2,\ldots, N\right\}$ represent the set of source servers, and $\mathcal{J}=\left\{N+1, N+2,\ldots, 2N\right\}$ represent the set of destination servers. The switch can be viewed as a bipartite graph, with one side being $\mathcal{I}$ and the other side being $\mathcal{J}$. For simplicity, we assume that all links within the switch have the same capacity or speed.

Coflow is a collection of independent flows, and its completion time is determined by the completion time of the last flow in the collection. Let $D^{(k)}=\left(d_{i,j,k}\right)_{i,j=1}^{N}$ be the demand matrix for coflow $k$, where $d_{i,j,k}$ represents the size of the flow transmitted from input $i$ to output $j$ within coflow $k$. Since all links within the switch have the same capacity or speed, we can consider flow size as transmission time. Each flow is identified by a triple $(i, j, k)$, where $i \in \mathcal{I}$ represents the source node, $j \in \mathcal{J}$ represents the destination node, and $k$ corresponds to the coflow. Additionally, we assume that flows are composed of discrete integer units of data. For simplicity, we assume that all flows within the coflow are started simultaneously, as shown in~\cite{Qiu2015}.

Consider a set of coflows $\mathcal{K}$, each associated with a coflow release time $r_k$ and a positive weight $w_k$. In the context of coflow scheduling with release times, the objective of this paper is to determine a schedule that minimizes the total weighted completion time of the coflows, expressed as $\sum_{k\in \mathcal{K}} w_kC_k$. For ease of explanation, we assign different meanings to the same symbols with different subscript symbols. Subscript $i$ represents the index of the source (or the input port), subscript $j$ represents the index of the destination (or the output port), and subscript $k$ represents the index of the coflow. Let $\mathcal{F}_{i}$ be the collection of flows with source $i$, represented by $\mathcal{F}_{i}=\left\{(i, j, k)| d_{i,j,k}>0, \forall k\in \mathcal{K}, \forall j\in \mathcal{J} \right\}$, and $\mathcal{F}_{j}$ be the set of flows with destination $j$, given by $\mathcal{F}_{j}=\left\{(i, j, k)| d_{i,j,k}>0, \forall k\in \mathcal{K}, \forall i\in \mathcal{I} \right\}$. We also let $\mathcal{F}_{k}=\left\{(i, j, k)| d_{i,j,k}>0, \forall i\in \mathcal{I}, \forall j\in \mathcal{J} \right\}$ be the set of flows with coflow $k$. Let $\mathcal{F}$ be the collection of all flows, represented by $\mathcal{F}=\left\{(i, j, k)| d_{i,j,k}>0, \forall k\in \mathcal{K}, \forall i\in \mathcal{I}, \forall j\in \mathcal{J} \right\}$.

The notation and terminology used in this paper are summarized in Table~\ref{tab:notations}.

\begin{table}[ht]
\caption{Notation and Terminology}
    \centering
        \begin{tabular}{||c|p{5in}||}
    \hline
     $N$      & The number of input/output ports.         \\
    \hline
     $n$      & The number of coflows.         \\
    \hline
     $\mathcal{I}, \mathcal{J}$ & The source server set and the destination server set.         \\
    \hline    
     $\mathcal{K}$ & The set of coflows.         \\
    \hline
     $D^{(k)}$     & The demand matrix of coflow $k$. \\
    \hline    
     $d_{i,j,k}$     & The size of the flow to be transferred from input $i$ to output $j$ in coflow $k$.   \\
    \hline     
     $C_{i,j,k}$ & The completion time of flow $(i, j, k)$. \\
    \hline     
     $C_k$     & The completion time of coflow $k$.   \\
    \hline     
     $r_k$     & The released time of coflow $k$.  \\
    \hline     
     $w_{k}$   &  The weight of coflow $k$. \\
    \hline     
		 $\mathcal{F}$ & $\mathcal{F}=\left\{(i, j, k)| d_{i,j,k}>0, \forall k\in \mathcal{K}, \forall i\in \mathcal{I}, \forall j\in \mathcal{J} \right\}$ is the set of all flows. \\
    \hline     
		 $\mathcal{F}_{i}$ & $\mathcal{F}_{i}=\left\{(i, j, k)| d_{i,j,k}>0, \forall k\in \mathcal{K}, \forall j\in \mathcal{J} \right\}$ is the set of flows with source $i$. \\
		\hline 					
		 $\mathcal{F}_{j}$ & $\mathcal{F}_{j}=\left\{(i, j, k)| d_{i,j,k}>0, \forall k\in \mathcal{K}, \forall i\in \mathcal{I} \right\}$ is the set of flows with destination $j$. \\
		\hline 					
		 $\mathcal{F}_{k}$ & $\mathcal{F}_{k}=\left\{(i, j, k)| d_{i,j,k}>0, \forall i\in \mathcal{I}, \forall j\in \mathcal{J} \right\}$ is the set of flows with coflow $k$. \\
		\hline 				
		$T$, $\mathcal{T}$ & $T$ is the time horizon and $\mathcal{T}=\left\{0, 1, \ldots, T\right\}$ is the set of time indices. \\
		\hline 				
		$L_{ik}$, $L_{jk}$ & $L_{ik}=\sum_{j=1}^{N}d_{ijk}$ is the total amount of data that coflow $k$ needs to transmit through the input port $i$ and $L_{jk}=\sum_{i=1}^{N}d_{ijk}$ is the total amount of data that coflow $k$ needs to transmit through the output port $j$. \\
		\hline 						
		$LB$   & $LB=\min_k\max\left\{\max_{i}\left\{L_{ik}\right\},\max_{j}\left\{L_{jk}\right\}\right\}$ is the minimum lower bound of coflow completion time.\\
		\hline 						
		$L$, $\mathcal{L}$    & $L$ is the smallest value satisfying the inequality $(1+\eta)^L\geq T+1$ and $\mathcal{L}=\left\{0, 1, \ldots, L\right\}$ be the set of time interval indices. \\
		\hline 			
			$I_{\ell}$ & 		$I_{\ell}$ is the $\ell$th time interval where $I_{0}=[0, 1]$ and $I_{\ell}=((1+\eta)^{\ell-1},(1+\eta)^{\ell}]$ for $1\leq \ell \leq L$. \\
		\hline 			
		$|I_{\ell}|$ & $|I_{\ell}|$ is the length of the $\ell$th interval where $|I_{0}|=1$ and $|I_{\ell}|=\eta(1+\eta)^{\ell-1}$ for $1\leq \ell \leq L$. \\
		\hline  						
        \end{tabular}
    \label{tab:notations}
\end{table}

\section{Our High-level Ideas}\label{sec:SOTA}
In the context of the coflow scheduling problem with release times, previous literature has primarily focused on determining the scheduling order of coflows. This paper takes a further step by improving performance through the determination of flow scheduling order. The approach presented in this paper is inspired by Schulz and Skutella~\cite{Schulz2002}. Initially, we employ a linear programming to solve the scheduling problem. Subsequently, the solution obtained from linear programming is used to randomly determine the starting transmission times for flows. We use this time to determine the transmission order of flows and analyze the expected approximation ratio of this randomized algorithm. Next, we employ a deterministic algorithm obtained through the derandomized method. Finally, we use an interval-indexed linear programming relaxation to reduce the number of variables in linear programming.

In analyzing the completion time of coflows, we consider the flow $(i, j, k)$ that completes last within the coflow. In this particular flow, we need to account for all flows on ports $i$ and $j$ with a priority higher than $(i, j, k)$. We have three possible scenarios. The first scenario is when some flows on port $i$ and port $j$ can be transmitted simultaneously. The second scenario is when certain time intervals on ports $i$ and $j$ are simultaneously idle. The third scenario is when there is no simultaneous transmission on ports $i$ and $j$, and no time intervals are simultaneously idle. The first scenario is not the worst-case scenario. In the second scenario, flow $(i, j, k)$ can be transmitted during the idle time. The third scenario represents the worst-case scenario. Therefore, we analyze based on the third scenario.

\section{Randomized Approximation Algorithm for the Coflow Scheduling Problem}\label{sec:Algorithm1}
This section addresses the coflow scheduling problem. Let 
\begin{eqnarray*}
T=\max_{k\in \mathcal{K}} r_{k}+\max_{i\in \mathcal{I}}\sum_{k\in \mathcal{K}} L_{ik}+\max_{j\in \mathcal{J}}\sum_{k\in \mathcal{K}} L_{jk}-1.
\end{eqnarray*}
be the time horizon where $L_{ik}=\sum_{j=1}^{N}d_{ijk}$ is the total amount of data that coflow $k$ needs to transmit through the input port $i$ and $L_{jk}=\sum_{i=1}^{N}d_{ijk}$ is the total amount of data that coflow $k$ needs to transmit through the output port $j$. Let $\mathcal{T}=\left\{0, 1, \ldots, T\right\}$ be the set of time indices. For every flow $(i, j, k)\in \mathcal{F}$ and every time point $t=r_{k}, \ldots, T$, let $y_{ijkt}$ be the amount of time flow $(i, j, k)$ is transmitted within the time interval $(t, t+1]$. The problem can be formulated as a linear programming relaxation, given by:
\begin{subequations}\label{coflow:main}
\begin{align}
& \text{min}  && \sum_{k \in \mathcal{K}} w_{k} C_{k}     &   & \tag{\ref{coflow:main}} \\
& \text{s.t.} && \sum_{t=r_{k}}^{T} \frac{y_{ijkt}}{d_{ijk}} = 1, && \forall (i, j, k)\in \mathcal{F} \label{coflow:a} \\
&  && \sum_{(i, j, k) \in \mathcal{F}_{i}} y_{ijkt} \leq 1, && \forall i\in \mathcal{I}, \forall t\in \mathcal{T} \label{coflow:b} \\
&  && \sum_{(i, j, k) \in \mathcal{F}_{j}} y_{ijkt} \leq 1, && \forall j\in \mathcal{J}, \forall t\in \mathcal{T} \label{coflow:c} \\
&  && C_{k}\geq C_{ijk}, && \forall k\in \mathcal{K}, \forall (i, j, k)\in \mathcal{F}_{k}, \label{coflow:d} \\
&  && y_{ijkt} \geq 0, && \forall (i, j, k)\in \mathcal{F}, \forall t\in \mathcal{T}\label{coflow:g}
\end{align}
\end{subequations}
where
\begin{eqnarray}\label{coflow:d2}
C_{ijk} = \sum_{t=r_{k}}^{T}\left(\frac{y_{ijkt}}{d_{ijk}}\left(t+\frac{1}{2}\right)+\frac{1}{2}y_{ijkt}\right).
\end{eqnarray}
In the linear program (\ref{coflow:main}), the variable $C_{k}$ denotes the completion time of coflow $k$ in the schedule. Constraint (\ref{coflow:a}) ensures the fulfillment of the transmitting requirements for each flow. The switch capacity constraints (\ref{coflow:b}) and (\ref{coflow:c}) stipulate that the switch can transmit only one flow at a time in each input port $i$ and output port $j$. Constraint (\ref{coflow:d}) specifies that the completion time of coflow $k$ is constrained by the completion times of all its constituent flows. Additionally, the equation (\ref{coflow:d2}) represents the lower bound of the completion time for flow $(i, j, k)$, and this lower bound occurs during continuous transmission between $C_{ijk}-d_{ijk}$ and $C_{ijk}$.

Due to the positive correlation between the number of variables and the amount of transmitted data, the linear program (\ref{coflow:main}) cannot solve the coflow scheduling problem instances with polynomial time complexity in the input size. Therefore, the runtime of algorithms using the optimal linear program solution is only pseudo-polynomial. However, we address this drawback by changing the time intervals to a polynomial number of intervals that geometrically increase in size. We will discuss the details of this technique in Section~\ref{sec:Algorithm3}.

\begin{algorithm}
\caption{Randomized Coflow Scheduling}
    \begin{algorithmic}[1]
				\STATE Compute an optimal solution $y$ to linear programming (\ref{coflow:main}). \label{alg1-2}
				\STATE For all flows $(i, j, k)\in \mathcal{F}$ assign flow $(i, j, k)$ to time $t$, where the time $t$ is chosen from the probability distribution that assigns flow $(i, j, k)$ to $t$ with probability $\frac{y_{ijkt}}{d_{ijk}}$; set $t_{ijk}=t$. \label{alg1-3}
		    \STATE wait until the first coflow is released \label{alg1-4}
				\WHILE{there is some incomplete flow}
            \FOR{every released and incomplete flow $(i, j, k)\in \mathcal{F}$ in non-decreasing order of $t_{ijk}$, breaking ties independently at random}
								\IF{the link $(i, j)$ is idle}
								    \STATE schedule flow $(i, j, k)$\label{alg1-1}
								\ENDIF
						\ENDFOR
						\WHILE{no new flow is completed or released}
						    \STATE transmit the flows that get scheduled in line \ref{alg1-1} at maximum rate 1.
						\ENDWHILE
				\ENDWHILE \label{alg1-5}
   \end{algorithmic}
\label{Alg1}
\end{algorithm}

The randomized coflow scheduling algorithm, presented in Algorithm~\ref{Alg1}, utilizes a list scheduling rule. In lines \ref{alg1-2}-\ref{alg1-3}, we randomly determine the start transmission time for each flow based on the solution to the linear programming (\ref{coflow:main}). It is important to note that we do not constrain the flows to start transmission at this specific time; rather, we use this time to determine the transmission order of the flows. In lines \ref{alg1-4}-\ref{alg1-5}, the flows are scheduled in non-decreasing order of $t_{ijk}$, with ties being broken independently at random. This scheduling ensures that all flows are transmitted in a preemptible manner.

\subsection{Analysis}
In this section, we validate the effectiveness of Algorithm~\ref{Alg1} by establishing its approximation ratios. More precisely, we illustrate that the algorithm attains an expected approximation ratio of $3$ for arbitrary release times and an expected approximation ratio of $2$ in the absence of release times.
In the algorithm analysis, we make the assumption that the random decisions for distinct flows in line \ref{alg1-3} of Algorithm~\ref{Alg1} exhibit pairwise independence.
\begin{lem}\label{lem:lem1}
If the last completed flow in coflow $k$ is $(i, j, k)$, the expected completion time $E[C_{k}]$ of coflow $k$ in the schedule constructed by Algorithm~\ref{Alg1} can be bounded from above by 
\begin{eqnarray*}
E[C_{k}] \leq 3\sum_{t=r_{k}}^{T} \frac{y_{ijkt}}{d_{ijk}}\left(t+\frac{1}{2}\right) + \sum_{t=r_{k}}^{T} y_{ijkt}.
\end{eqnarray*}
If all coflows are released at time 0, the following stronger bound holds:
\begin{eqnarray*}
E[C_{k}] \leq 2\sum_{t=r_{k}}^{T} \frac{y_{ijkt}}{d_{ijk}}\left(t+\frac{1}{2}\right) + \sum_{t=r_{k}}^{T} y_{ijkt}
\end{eqnarray*}
\end{lem}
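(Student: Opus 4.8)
The plan is to adapt the list-scheduling analysis of Schulz and Skutella~\cite{Schulz2002} to the two-resource (input-port / output-port) structure of a coflow. Throughout I condition on the random choice $t_{ijk}=s$ made in line~\ref{alg1-3} for the last-completing flow $(i,j,k)$, bound the combinatorial completion time of $(i,j,k)$ in the list schedule, and only afterwards average over $s$. For the structural bound I invoke the worst case identified in Section~\ref{sec:SOTA} (scenario three, no simultaneous transmission and no simultaneous idle on ports $i$ and $j$): once $(i,j,k)$ is released it can only be delayed by higher-priority flows sharing port $i$ or port $j$, so that
\[
C_{ijk}\ \le\ r_k+\!\!\sum_{\substack{(i,j',k')\in\mathcal{F}_i\\ t_{ij'k'}\le s}}\!\! d_{ij'k'}\ +\!\!\sum_{\substack{(i',j,k')\in\mathcal{F}_j\\ t_{i'jk'}\le s}}\!\! d_{i'jk'}\ -\ d_{ijk},
\]
where the $-d_{ijk}$ removes the double count of $(i,j,k)$, which sits in both port sums. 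A key point is that every counted flow obeys $r_{k'}\le t_{i'j'k'}\le s$, so each higher-priority flow is already released by time $s$; this is what makes the inequality a legitimate upper bound rather than an artifact.

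Next I take the conditional expectation and use pairwise independence: a competing flow $(i,j',k')\neq(i,j,k)$ contributes $d_{ij'k'}\Pr[t_{ij'k'}\le s]=\sum_{t'=r_{k'}}^{s} y_{ij'k't'}$. Summing over the entire port and exchanging the order of summation lets me apply the capacity constraints~(\ref{coflow:b}) and~(\ref{coflow:c}) in the form $\sum_{t'=0}^{s}\sum_{(i,j',k')\in\mathcal{F}_i} y_{ij'k't'}\le s+1$, and symmetrically for port $j$. This caps the expected higher-priority work on each port by about $s+1$, yielding $E[C_{ijk}\mid t_{ijk}=s]\le r_k+2s+d_{ijk}+O(1)$.

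Finally I average over $s$ with the weights $\Pr[t_{ijk}=s]=y_{ijks}/d_{ijk}$. Writing $P=\sum_t \tfrac{y_{ijkt}}{d_{ijk}}(t+\tfrac12)$, constraint~(\ref{coflow:a}) gives $\sum_t \tfrac{y_{ijkt}}{d_{ijk}}=1$, so the mean of $s$ equals $P-\tfrac12$ and $\sum_t y_{ijkt}=d_{ijk}$; the two port terms therefore produce $2P$. When $r_k=0$ this already delivers $E[C_k]\le 2P+\sum_t y_{ijkt}$. For arbitrary release times, the fact that $y_{ijkt}=0$ for $t<r_k$ forces the mean, hence $P$, to satisfy $r_k\le P$, so the leftover $r_k$ folds into a third copy of $P$ and gives $E[C_k]\le 3P+\sum_t y_{ijkt}$, which are exactly the two claimed bounds.

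I expect the main obstacle to be the structural inequality in the first step. The delicate points are (i) confirming that scenario three genuinely dominates the other two cases, (ii) handling release times so that no higher-priority flow is ever counted before it exists, and (iii) accounting correctly for tie-breaking and for the single shared flow $(i,j,k)$ that appears on both ports. Once this combinatorial bound is secured, the remaining work is the routine expectation computation together with the capacity constraints and the identity $\sum_t y_{ijkt}=d_{ijk}$.
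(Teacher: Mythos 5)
Your overall strategy is the same as the paper's: condition on the randomly assigned time $s=t_{ijk}$ of the last-finishing flow, bound its completion time by a starting point plus the total size of higher-priority flows sharing port $i$ or port $j$, take conditional expectations using pairwise independence and the capacity constraints (\ref{coflow:b})--(\ref{coflow:c}), and then average with weights $y_{ijks}/d_{ijks}$. Your anchoring of the busy interval at $r_k$ (rather than the paper's ``earliest non-idle point $\tau$'' followed by $\tau\le t$) is a harmless, arguably cleaner, variant, and your observation $r_k\le\sum_t\frac{y_{ijkt}}{d_{ijk}}(t+\tfrac12)$ correctly supplies the third copy of $P$ in the release-time case.

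There is, however, one concrete quantitative slip that prevents your argument from reaching the stated constants: you charge each competitor $d_{i'j'k'}\Pr[t_{i'j'k'}\le s]$, which counts flows tied at $t_{i'j'k'}=s$ with probability $1$, so your per-port cap becomes $\sum_{t'=0}^{s}\sum_{\mathcal{F}_i}y_{ij'k't'}\le s+1$. What is actually needed is the probability that $(i',j',k')$ \emph{precedes} $(i,j,k)$ under random tie-breaking, namely $\sum_{t'<s}\frac{y_{i'j'k't'}}{d_{i'j'k'}}+\frac12\frac{y_{i'j'k's}}{d_{i'j'k'}}$, which caps each port's expected competing work at $s+\tfrac12$. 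With your $s+1$ the computation yields $E[C_k]\le 3P+d_{ijk}+\tfrac12$ (and $2P+d_{ijk}+1$ for zero release times), strictly weaker than the lemma; since the downstream theorems compare against $C_k^*\ge P+\tfrac12 d_{ijk}$ with no additive slack, the extra constant would break the ratio-$2$ and ratio-$3$ conclusions. You flagged tie-breaking as a delicate point; the fix is exactly to replace $\Pr[t_{i'j'k'}\le s]$ by the halved tie term above, after which your derivation reproduces the paper's bounds (indeed with a little room to spare from $r_k\le P-\tfrac12$).
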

\begin{proof}
Given that the most recent completed flow in coflow $k$ is denoted as $(i, j, k)$, we have $C_{k}=C_{ijk}$. Consider an arbitrary yet fixed flow $(i, j, k)\in \mathcal{F}$ and let $t$ represent the time to which flow $(i, j, k)$ has been assigned. Let $\tau\geq 0$ be the earliest point in time such that there is no idle time in the constructed schedule during the interval $(\tau, C_{k}]$. Define $\mathcal{P}_{i}$ as the set of flows transmitted on port $i$ during this time interval, and let $\mathcal{P}_{j}$ be the set of flows transmitted on port $j$ during the same interval. We have
\begin{eqnarray}\label{lem1:eq1}
C_{k}-\tau \leq \sum_{(i', j', k')\in \mathcal{P}_{i}\cup \mathcal{P}_{j}} d_{i'j'k'}.
\end{eqnarray}

Since all flows $(i', j', k')\in \mathcal{P}_{i}\cup \mathcal{P}_{j}$ are started no later than flow $(i, j, k)$, their assigned times must satisfy $t_{i'j'k'}\leq t$. Specifically, for all $(i', j', k')\in \mathcal{P}_{i}\cup \mathcal{P}_{j}$, it holds that $r_{k'}\leq t_{i'j'k'}\leq t$. Combining this with (\ref{lem1:eq1}), we deduce inequality
\begin{eqnarray*}\label{lem1:eq2}
C_{k} \leq t+\sum_{(i', j', k')\in \mathcal{P}_{i}\cup \mathcal{P}_{j}} d_{i'j'k'}.
\end{eqnarray*}
In analyzing the expected completion time $E[C_{k}]$ for coflow $k$, we initially hold the assignment of flow $(i, j, k)$ to time $t$ constant and establish an upper bound on the conditional expectation $E_{t}[C_{k}]$:
\begin{eqnarray}\label{lem1:eq3}
E_{t}[C_{k}]  & \leq & t+E_{t}\left[\sum_{(i', j', k')\in \mathcal{P}_{i}\cup \mathcal{P}_{j}} d_{i'j'k'}\right] \notag\\
              & \leq & t+d_{ijk}  \notag\\
					    &      & +\sum_{(i', j', k')\in \mathcal{P}_{i}\setminus \left\{(i, j, k)\right\}} d_{i'j'k'} \cdot Pr_{t}(i', j', k') \notag\\
					    &      & +\sum_{(i', j', k')\in\mathcal{P}_{j}\setminus \left\{(i, j, k)\right\}} d_{i'j'k'} \cdot Pr_{t}(i', j', k') \notag\\
							& \leq & t + d_{ijk} +2\left(t+\frac{1}{2}\right) \notag\\
							& \leq & 3 \left(t+\frac{1}{2}\right)+ d_{ijk}
\end{eqnarray}
where $Pr_{t}(i', j', k')=Pr_{t}[(i', j', k') \mbox{~before~} (i, j, k)]=\sum_{\ell=r_{k'}}^{t-1}\frac{y_{i'j'k'\ell}}{d_{i'j'k'}}+\frac{1}{2}\frac{y_{i'j'k't}}{d_{i'j'k'}}$.
Finally, applying the formula of total expectation to eliminate conditioning results in inequality (\ref{lem1:eq3}). We have
\begin{eqnarray*}\label{lem1:eq4}
E[C_{k}] & =     & \sum_{t=r_{k}}^{T} \frac{y_{ijkt}}{d_{ijk}}E_{t}[C_{k}]  \\
         & \leq  & 3\sum_{t=r_{k}}^{T} \frac{y_{ijkt}}{d_{ijk}}\left(t+\frac{1}{2}\right) + \sum_{t=r_{k}}^{T} y_{ijkt}.
\end{eqnarray*}

In the absence of nontrivial release times, a more robust bound can be established. Note that in this case, $\tau=0$ and $C_{k} \leq \sum_{(i', j', k')\in \mathcal{P}_{i}\cup \mathcal{P}_{j}} d_{i'j'k'}$. This yields
\begin{eqnarray*}\label{lem1:eq5}
E[C_{k}] & \leq  & 2\sum_{t=r_{k}}^{T} \frac{y_{ijkt}}{d_{ijk}}\left(t+\frac{1}{2}\right) + \sum_{t=r_{k}}^{T} y_{ijkt}.
\end{eqnarray*}
and eventually the claimed result.
\end{proof}

\begin{thm}\label{thm:thm1}
For the coflow scheduling problem with release times, the expected objective function value of the schedule constructed by Algorithm~\ref{Alg1} is at most three times the value of an optimal solution.
\end{thm}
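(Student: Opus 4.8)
The plan is to combine Lemma~\ref{lem:lem1} with the completion-time definition~(\ref{coflow:d2}) and the coupling constraint~(\ref{coflow:d}) of the linear program, and then to invoke the fact that~(\ref{coflow:main}) is a relaxation of the true scheduling problem. Fix a coflow $k$ and let $(i,j,k)$ be the flow that completes last within $k$ in the schedule produced by Algorithm~\ref{Alg1}, so that $C_k = C_{ijk}$ in that schedule. Lemma~\ref{lem:lem1} already bounds the expected completion time in terms of the optimal LP solution $y$, namely $E[C_k] \leq 3\sum_{t=r_k}^{T}\frac{y_{ijkt}}{d_{ijk}}\left(t+\frac{1}{2}\right) + \sum_{t=r_k}^{T} y_{ijkt}$.

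First I would compare this right-hand side with $3\,C_{ijk}$, where $C_{ijk}$ is the LP quantity from~(\ref{coflow:d2}). Expanding the definition gives $3C_{ijk} = 3\sum_{t=r_k}^{T}\frac{y_{ijkt}}{d_{ijk}}\left(t+\frac{1}{2}\right) + \frac{3}{2}\sum_{t=r_k}^{T} y_{ijkt}$. Since all $y_{ijkt}\ge 0$ by~(\ref{coflow:g}), we have $\sum_{t} y_{ijkt} \le \frac{3}{2}\sum_{t} y_{ijkt}$, so the bound in Lemma~\ref{lem:lem1} is dominated term by term, yielding $E[C_k]\le 3 C_{ijk}$. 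Constraint~(\ref{coflow:d}) then gives $C_{ijk}\le C_k$ for the optimal LP completion-time variable $C_k$, hence $E[C_k]\le 3 C_k^{\mathrm{LP}}$, where $C_k^{\mathrm{LP}}$ denotes the value of $C_k$ in the optimal solution of~(\ref{coflow:main}).

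Next I would sum over all coflows weighted by $w_k$ and use linearity of expectation to obtain $E\!\left[\sum_{k\in\mathcal{K}} w_k C_k\right]=\sum_{k\in\mathcal{K}} w_k E[C_k]\le 3\sum_{k\in\mathcal{K}} w_k C_k^{\mathrm{LP}}$, which is exactly three times the optimal value of~(\ref{coflow:main}). The final step is to argue that~(\ref{coflow:main}) is a valid relaxation, i.e.\ its optimal value is a lower bound on the optimal total weighted completion time $\mathrm{OPT}$. For this I would take any optimal integral schedule, set $y_{ijkt}$ equal to the amount of flow $(i,j,k)$ transmitted during $(t,t+1]$, and verify that~(\ref{coflow:a})--(\ref{coflow:g}) hold and that the LP lower bound $C_{ijk}$ in~(\ref{coflow:d2}) is at most the true completion time of $(i,j,k)$; this shows the LP optimum is at most $\mathrm{OPT}$. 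Chaining the inequalities yields $E\!\left[\sum_{k\in\mathcal{K}} w_k C_k\right]\le 3\,\mathrm{OPT}$.

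The steps are all short, since Lemma~\ref{lem:lem1} does the real work; the only point requiring genuine care is the slack argument in the second paragraph, namely recognizing that the coefficient $1$ on $\sum_{t} y_{ijkt}$ in Lemma~\ref{lem:lem1} is weaker than the coefficient $\frac{3}{2}$ appearing in $3C_{ijk}$, so the Lemma bound can be absorbed cleanly into $3C_{ijk}$ without any loss. I expect the relaxation argument to be standard and the weighted summation to follow immediately from linearity of expectation.
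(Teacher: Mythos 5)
Your proposal is correct and follows essentially the same route as the paper: invoke Lemma~\ref{lem:lem1}, absorb its bound into $3C_{ijk}$ via the observation that the coefficient $1$ on $\sum_t y_{ijkt}$ is dominated by the $\frac{3}{2}$ in $3C_{ijk}$, pass to $C_k^*$ via constraint~(\ref{coflow:d}), and finish with linearity of expectation plus the fact that the LP is a relaxation. The paper's proof is just a terser statement of the same chain of inequalities, so no further comparison is needed.
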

\begin{proof}
Lemma~\ref{lem:lem1}, in conjunction with constraint (\ref{coflow:d}), establishes that the expected completion time for every coflow $k$ is bounded by three times its linear programming completion time $C_{k}^*$. Since the optimal solution value of linear programming serves as a lower bound on the optimal schedule value, and considering the nonnegativity of weights, this conclusion follows from the linearity of expectations.
\end{proof}

\begin{thm}\label{thm:thm2}
For the coflow scheduling problem without release times, the expected objective function value of the schedule constructed by Algorithm~\ref{Alg1} is at most twice the value of an optimal solution.
\end{thm}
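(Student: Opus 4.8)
The plan is to mirror the proof of Theorem~\ref{thm:thm1}, but to invoke the stronger of the two bounds in Lemma~\ref{lem:lem1} --- the one that holds when all release times are zero. Fix a coflow $k$ and let $(i,j,k)$ be its last completed flow, so that the zero-release bound of Lemma~\ref{lem:lem1} reads
\begin{eqnarray*}
E[C_{k}] \leq 2\sum_{t=r_{k}}^{T} \frac{y_{ijkt}}{d_{ijk}}\left(t+\frac{1}{2}\right) + \sum_{t=r_{k}}^{T} y_{ijkt}.
\end{eqnarray*}

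First I would observe that this right-hand side is \emph{exactly} twice the linear-programming flow completion time defined in (\ref{coflow:d2}). Writing $A=\sum_{t=r_{k}}^{T}\frac{y_{ijkt}}{d_{ijk}}(t+\frac{1}{2})$ and $B=\sum_{t=r_{k}}^{T}y_{ijkt}$, the bound is $E[C_{k}]\leq 2A+B$, whereas (\ref{coflow:d2}) gives $C_{ijk}=A+\frac{1}{2}B$; hence $2A+B=2C_{ijk}$ with no slack lost in the matching of coefficients. Next, constraint (\ref{coflow:d}) guarantees $C_{ijk}\leq C_{k}^{*}$, where $C_{k}^{*}$ is the value assigned to coflow $k$ by the optimal linear-programming solution. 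Combining these two steps yields $E[C_{k}]\leq 2C_{k}^{*}$ for every coflow $k$.

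Finally I would sum over all coflows. By linearity of expectation and the nonnegativity of the weights $w_{k}$,
\begin{eqnarray*}
E\left[\sum_{k\in\mathcal{K}} w_{k} C_{k}\right]=\sum_{k\in\mathcal{K}} w_{k} E[C_{k}]\leq 2\sum_{k\in\mathcal{K}} w_{k} C_{k}^{*}.
\end{eqnarray*}
Since the optimal value of the linear program (\ref{coflow:main}) is a lower bound on the optimal schedule value, the right-hand side is at most twice the optimum, establishing the claimed factor of $2$.

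This proof carries essentially no difficulty beyond Lemma~\ref{lem:lem1}, so there is no genuine obstacle to overcome. The one point worth highlighting --- and the only place any care is needed --- is the exact coefficient match in the second step: the stronger bound's $2A+B$ coincides with $2C_{ijk}$ precisely, whereas in the arbitrary-release analysis of Theorem~\ref{thm:thm1} the leading term carries a factor $3$ rather than $2$. It is exactly this matching, made possible by setting $\tau=0$ in Lemma~\ref{lem:lem1}, that lets the approximation factor drop from $3$ to $2$.
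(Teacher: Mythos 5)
Your proposal is correct and follows exactly the paper's own (much terser) argument: apply the stronger zero-release bound of Lemma~\ref{lem:lem1}, observe that $2A+B=2C_{ijk}$ matches twice the expression in (\ref{coflow:d2}), invoke constraint (\ref{coflow:d}) and the fact that the LP optimum lower-bounds the scheduling optimum, and sum with nonnegative weights. The only difference is that you spell out the coefficient matching that the paper leaves implicit.
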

\begin{proof}
The claimed result follws from Lemma~\ref{lem:lem1} and the linear programming constraint (\ref{coflow:d}).							
\end{proof}

\section{Deterministic Approximation Algorithm for the Coflow Scheduling Problem}\label{sec:Algorithm2}
This section introduces a deterministic algorithm with a bounded worst-case ratio. The randomized algorithms presented in this paper can be derandomized using the method of conditional probabilities, as inspired by Schulz and Skutella~\cite{Schulz2002}. This approach sequentially evaluates random decisions and chooses the most promising one at each decision point. We make the assumption that all uncertain decisions are random. Consequently, if the expected value of the objective function is minimal under the corresponding conditions, this decision is selected.

Algorithm~\ref{Alg2} is derived from Algorithm~\ref{Alg1}. In lines~\ref{alg2-2}-\ref{alg2-3}, we select a time $t$ for each flow to minimize the expected total weighted completion time. In line~\ref{alg2-4}, all flows break ties with smaller indices. We say $ (i', j', k') < (i, j, k)$ if $k' < k$, or $j' < j$ and $k' = k$, or $i' < i$ and $j' = j$ and $k' = k$.

\begin{algorithm}
\caption{Deterministic Coflow Scheduling}
    \begin{algorithmic}[1]
				\STATE Compute an optimal solution $y$ to linear programming (\ref{coflow:main}).
				\STATE Set $\mathcal{P}=\emptyset$; $x=0$; \label{alg2-2}
				\FOR{all $(i, j, k)\in F$} \label{alg2-5}
					\STATE for all possible assignments of $(i, j, k)\in \mathcal{F}\setminus \mathcal{P}$ to times $t$ compute $E_{\mathcal{P}\cup \left\{(i, j, k)\right\}, x}[\sum_{\ell}w_{\ell} C_{\ell}]$; 
					\STATE Determine the time that minimizes the conditional expectation $E_{\mathcal{P}\cup \left\{(i, j, k)\right\}, x}[\sum_{\ell}w_{\ell} C_{\ell}]$
					\STATE Set $\mathcal{P}=\mathcal{P}\cup \left\{(i, j, k)\right\}$; $x_{ijkt}=1$; $t_{ijk}=t$;
				\ENDFOR \label{alg2-3}
		    \STATE wait until the first coflow is released
				\WHILE{there is some incomplete flow}
            \FOR{every released and incomplete flow $(i, j, k)\in \mathcal{F}$ in non-decreasing order of $t_{ijk}$, breaking ties with smaller indices} \label{alg2-4}
								\IF{the link $(i, j)$ is idle}
								    \STATE schedule flow $(i, j, k)$\label{alg2-1}
								\ENDIF
						\ENDFOR
						\WHILE{no new flow is completed or released}
						    \STATE transmit the flows that get scheduled in line \ref{alg2-1} at maximum rate 1.
						\ENDWHILE
				\ENDWHILE
   \end{algorithmic}
\label{Alg2}
\end{algorithm}

Let 
\begin{eqnarray*}
C(i,j,k,t) & = & t+d_{ijk} \\
           &   & +\sum_{(i', j', k')\in \mathcal{F}_{i}\setminus \left\{(i, j, k)\right\}} \sum_{\ell=r_{k'}}^{t-1}y_{i'j'k'\ell} \\
					 &   & +\sum_{(i, j', k')<(i, j, k)} y_{ij'k't} \\
					 &   & +\sum_{(i', j', k')\in\mathcal{F}_{j}\setminus \left\{(i, j, k)\right\}} \sum_{\ell=r_{k'}}^{t-1} y_{i'j'k'\ell} \\
					 &   & +\sum_{(i', j, k')<(i, j, k)} y_{i'jk't}
\end{eqnarray*}
be conditional expectation completion time of flow $(i, j, k)$ to which flow $(i, j, k)$ has been assigned by $t$. The expected completion time of coflow $k$ in the schedule output by Algorithm~\ref{Alg2} is
\begin{eqnarray*}
E[C_{k}] & = & \max_{(i, j, k)\in \mathcal{F}_{k}} \left\{\sum_{t=r_{k}}^{T} \frac{y_{ijkt}}{d_{ijk}} C(i,j,k,t)\right\}.
\end{eqnarray*}

Let $\mathcal{P}_{i}\subseteq F_{i}$ and $\mathcal{P}_{j}\subseteq F_{j}$ represent subsets of flows that have already been assigned the time. 
For each flow $(i', j', k')\in \mathcal{P}_{i} \cup \mathcal{P}_{j}$, let the 0/1-variable $x_{i'j'k't}$ for $t\geq r_{k'}$ indicate whether $(i', j', k')$ has been assigned to the time $t$ (i.e., $x_{i'j'k't}=1$) or not ($x_{i'j'k't}=0$). This allows us to formulate the following expressions for the conditional expectation of the completion time of $(i, j, k)$. 
Let
\begin{eqnarray*}
D(i,j,k,t) & = & t+d_{ijk} \\
           &   & +\sum_{(i', j', k')\in \mathcal{P}_{i}\cup  \mathcal{P}_{j}} \sum_{\ell=r_{k'}}^{t-1}x_{i'j'k'\ell}d_{i'j'k'} \\
					 &   & +\sum_{\substack{(i', j', k')\in \mathcal{P}_{i}\cup  \mathcal{P}_{j}\\ (i', j', k')<(i, j, k)}} x_{i'j'k't}d_{i'j'k'} \\
					 &   & +\sum_{\substack{(i', j', k')\in \\ (\mathcal{F}_{i}\cup \mathcal{F}_{j})\setminus \left\{\mathcal{P}_{i}\cup \mathcal{P}_{j}\cup (i, j, k)\right\}}} \sum_{\ell=r_{k'}}^{t-1} y_{i'j'k'\ell} \\
					 &   & +\sum_{\substack{(i', j', k')\in \\ (\mathcal{F}_{i}\cup \mathcal{F}_{j})\setminus \left\{\mathcal{P}_{i}\cup \mathcal{P}_{j}\cup (i, j, k)\right\}\\ (i', j', k')<(i, j, k)}} y_{i'j'k't}
\end{eqnarray*}
be conditional expectation completion time of flow $(i, j, k)$ to which flow $(i, j, k)$ has been assigned by $t$ when $(i, j, k)\notin \mathcal{P}_{i}\cup \mathcal{P}_{j}$ and let
\begin{eqnarray*}
E(i,j,k,t) & = & t+d_{ijk} \\
           &   & +\sum_{(i', j', k')\in \mathcal{P}_{i}\cup  \mathcal{P}_{j}} \sum_{\ell=r_{k'}}^{t-1}x_{i'j'k'\ell}d_{i'j'k'} \\
					 &   & +\sum_{\substack{(i', j', k')\in \mathcal{P}_{i}\cup  \mathcal{P}_{j}\\ (i', j', k')<(i, j, k)}} x_{i'j'k't}d_{i'j'k'} \\
					 &   & +\sum_{\substack{(i', j', k')\in \\(\mathcal{F}_{i}\cup \mathcal{F}_{j})\setminus \left\{\mathcal{P}_{i}\cup \mathcal{P}_{j}\right\}}} \sum_{\ell=r_{k'}}^{t-1} y_{i'j'k'\ell} \\
					 &   & +\sum_{\substack{(i', j', k')\in \\(\mathcal{F}_{i}\cup \mathcal{F}_{j})\setminus \left\{\mathcal{P}_{i}\cup \mathcal{P}_{j}\right\} \\ (i', j', k')<(i, j, k)}} y_{i'j'k't}
\end{eqnarray*}
be expectation completion time of flow $(i, j, k)$ to which flow $(i, j, k)$ has been assigned by $t$ when $(i, j, k)\in \mathcal{P}_{i}\cup \mathcal{P}_{j}$.
Let $\mathcal{P}\subseteq F$ represent subsets of flows that have already been assigned the switch-interval pair.
The expected completion time of coflow $k$ is the maximum expected completion time among its flows. We have
\begin{eqnarray}\label{eq1}
E_{\mathcal{P},x}[C_{k}] & = & \max \left\{A, B\right\}
\end{eqnarray}
where
\begin{eqnarray*}
A & = & \max_{(i, j, k)\in \mathcal{F}_{k}\setminus \mathcal{P}} \left\{\sum_{t=r_{k}}^{T} \frac{y_{ijkt}}{d_{ijk}} D(i,j,k,t)\right\},
\end{eqnarray*}
\begin{eqnarray}\label{B}
B & = & \max_{(i, j, k)\in \mathcal{F}_{k}\cap \mathcal{P}} \left\{E(i,j,k,t_{ijk}) \right\}.
\end{eqnarray}
In equation~(\ref{B}), $t_{ijk}$ is the $(i, j, k)$ has been assigned to, i.e., $x_{ijkt_{ijk}}=1$.

\begin{lem}\label{lem:lem2}
Let $y$ be an optimal solution to linear programming (\ref{coflow:main}), $\mathcal{P}\subseteq \mathcal{F}$, and let $x$ represent a fixed assignment of the flows in $\mathcal{P}$ to times. Moreover, for $(i, j, k)\in \mathcal{F}\setminus \mathcal{P}$, there exists an assignment of $(i, j, k)$ to time $t$ with $r_{k}\leq t$ such that
\begin{eqnarray*}
E_{\mathcal{P}\cup \left\{(i, j, k)\right\}, x}\left[\sum_{\ell}w_{\ell} C_{\ell}\right] \leq E_{\mathcal{P}, x}\left[\sum_{\ell}w_{\ell} C_{\ell}\right].
\end{eqnarray*}
\end{lem}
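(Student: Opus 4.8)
The plan is to prove the lemma by the averaging step that underlies the method of conditional expectations. The flow $(i,j,k)\in\mathcal{F}\setminus\mathcal{P}$ is the next random decision to be frozen: in Algorithm~\ref{Alg1} it is assigned to time $t$ with probability $y_{ijkt}/d_{ijk}$, and by constraint (\ref{coflow:a}) these probabilities sum to one over $t=r_k,\ldots,T$, so they form a genuine distribution supported only on times $t\ge r_k$. Writing $x_t$ for the assignment that extends $x$ by additionally setting $t_{ijk}=t$, the goal is to establish the decomposition $E_{\mathcal{P},x}[\sum_\ell w_\ell C_\ell]=\sum_{t=r_k}^{T}\frac{y_{ijkt}}{d_{ijk}}\,E_{\mathcal{P}\cup\{(i,j,k)\},x_t}[\sum_\ell w_\ell C_\ell]$, in which the left side keeps $(i,j,k)$ random while freezing $\mathcal{P}$ via $x$, and each summand on the right freezes $(i,j,k)$ at $t$ as well. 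Once this identity is available the conclusion is immediate: the minimum of the summands is at most their convex combination, so the minimizing $t$ gives $E_{\mathcal{P}\cup\{(i,j,k)\},x_t}[\sum_\ell w_\ell C_\ell]\le E_{\mathcal{P},x}[\sum_\ell w_\ell C_\ell]$, which is exactly the claimed inequality, with $r_k\le t$ holding automatically.

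First I would check the identity one flow at a time. Fixing an arbitrary flow $(i',j',\ell)$, its conditional expected completion time is $D(i',j',\ell,\cdot)$ when it is not yet in $\mathcal{P}$ and $E(i',j',\ell,t_{i'j'\ell})$ once frozen; the only terms in these expressions that are influenced by freezing $(i,j,k)$ are the ones accounting for the expected load that $(i,j,k)$ places ahead of $(i',j',\ell)$ on the shared port $i$ or $j$. For those terms, replacing the $y$-weighted contribution of $(i,j,k)$ by its realized value at each $t$ and then re-averaging with weights $y_{ijkt}/d_{ijk}$ reproduces the original $y$-weighted contribution verbatim; here constraint (\ref{coflow:a}) is again what guarantees the weights total one. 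This yields the per-flow total-expectation identity, namely that the conditional expected completion time of $(i',j',\ell)$ computed with $(i,j,k)$ random equals the $y$-weighted average over $t$ of the same quantity computed with $(i,j,k)$ frozen at $t$.

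The hard part will be lifting this clean per-flow identity to the coflow level, because by (\ref{eq1}) the quantity $E_{\mathcal{P},x}[C_\ell]$ is a \emph{maximum} over the flows of coflow $\ell$, and a maximum does not commute with averaging (in general $\max_a\sum_t p_t g_a(t)\le\sum_t p_t\max_a g_a(t)$, which is the wrong direction for a naive term-by-term argument). I would resolve this by anchoring the averaging on the genuine objective rather than on the surrogate maxima: for any fixed vector of assigned times the list-scheduling rule of Algorithm~\ref{Alg2} is deterministic, so the realized value $\sum_\ell w_\ell C_\ell$ is a deterministic function of those times, and conditioning on the single random time of $(i,j,k)$ therefore gives the displayed decomposition with exact equality by the law of total expectation. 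The role of the $D$ and $E$ formulas is then precisely to evaluate both sides of this equality consistently, so that the algorithm's choice of the minimizing $t$ is legitimate; verifying that these bookkeeping expressions match the true conditional expectation across the switch over both ports $i$ and $j$ is the step demanding the most care, after which the pigeonhole conclusion from the first paragraph completes the proof.
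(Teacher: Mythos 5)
Your core strategy coincides with the paper's: write $E_{\mathcal{P},x}[\sum_{\ell}w_{\ell}C_{\ell}]$ as a convex combination of the quantities $E_{\mathcal{P}\cup\{(i,j,k)\},x_t}[\sum_{\ell}w_{\ell}C_{\ell}]$ with weights $y_{ijkt}/d_{ijk}$ (a genuine distribution supported on $t\geq r_k$ by constraint (\ref{coflow:a})), and then pick the minimizing $t$. The paper's proof consists of exactly that one assertion and nothing more. Your per-flow verification of the averaging identity for the linear bookkeeping terms is correct and is a useful addition: replacing the $y$-weighted contribution of $(i,j,k)$ by its realized contribution at $t$ and re-averaging with weights $y_{ijkt}/d_{ijk}$ does recover the original term verbatim.

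Where you genuinely go beyond the paper is in flagging the obstruction that the paper's one-line proof silently ignores: by (\ref{eq1}), $E_{\mathcal{P},x}[C_k]$ is a \emph{maximum} over the flows of coflow $k$, and the per-flow identity only yields $\max_f\sum_t p_t\,g_f(t)\leq\sum_t p_t\max_f g_f(t)$. This is the wrong direction: it shows the convex combination of the frozen estimators is at \emph{least} the unfrozen estimator, not equal to it, so ``the minimum summand is at most the average'' no longer delivers the claimed inequality. (An abstract counterexample: $g_{f_1}=(0,2)$, $g_{f_2}=(2,0)$ with $p=(\tfrac12,\tfrac12)$ gives $\max$ of averages $1$ but every frozen maximum equal to $2$.) You are right that this is the crux, and the paper's proof does not address it.

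However, your proposed repair does not close the gap. Anchoring on the true realized objective via the law of total expectation is sound and does prove that a good $t$ exists \emph{for the true conditional expectation} of the schedule's value. But the lemma, and the quantity the algorithm actually computes and minimizes, is the estimator $E_{\mathcal{P},x}$ defined by (\ref{eq1}), and the $D$ and $E$ formulas are not exact conditional expectations of the completion times produced by the list-scheduling rule: they are the worst-case upper bounds inherited from the analysis of Lemma~\ref{lem:lem1} (the realized completion time is strictly smaller whenever the two ports transmit in parallel or idle simultaneously). So the verification you defer --- that the bookkeeping expressions ``match the true conditional expectation'' --- cannot succeed as an identity, and the statement one actually needs is that the max-based estimator is a pessimistic estimator, i.e., that it both upper-bounds the true conditional expectation and does not increase under the minimizing freeze; the maximum is precisely what breaks the second property. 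Neither your argument nor the paper's establishes that descent inequality, so the lemma as stated for the estimator of (\ref{eq1}) remains unproven in both write-ups; your version has the merit of making the missing step explicit.
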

\begin{proof}
The expression for the conditional expectation, $E_{\mathcal{P}, x}[\sum_{\ell}w_{\ell} C_{\ell}]$, can be expressed as a convex combination of conditional expectations $E_{\mathcal{P}\cup \left\{(i, j, k)\right\}, x}[\sum_{\ell}w_{\ell} C_{\ell}]$ across all possible assignments of flow $(i, j, k)$ to time $t$, where the coefficients are given by $\frac{y_{ijkt}}{d_{ijk}}$. The optimal combination is determined by the condition $E_{\mathcal{P}\cup \left\{(i, j, k)\right\}, x}[\sum_{\ell}w_{\ell} C_{\ell}] \leq E_{\mathcal{P}, x}[\sum_{\ell}w_{\ell} C_{\ell}]$, eventually the claimed result.			
\end{proof}

\begin{thm}\label{thm:thm3}
Algorithm~\ref{Alg2} is a deterministic algorithm with performance guarantee 3 for the coflow scheduling problem with release times and with performance guarantee $2+\frac{1}{LB}$ for the coflow scheduling problem without release times where
\begin{eqnarray*}
LB=\min_k\max\left\{\max_{i}\left\{L_{ik}\right\},\max_{j}\left\{L_{jk}\right\}\right\}.
\end{eqnarray*}
Moreover, the running time of this algorithm is polynomial in the number of variables of linear programming (\ref{coflow:main}).
\end{thm}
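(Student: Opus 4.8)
The plan is to derandomize the analysis of Algorithm~\ref{Alg1} through the method of conditional expectations, using Lemma~\ref{lem:lem2} as the engine and a tie-breaking-adjusted version of Lemma~\ref{lem:lem1} to bound the starting value. First I would instantiate Lemma~\ref{lem:lem2} iteratively: beginning with $\mathcal{P}=\emptyset$ and processing the flows in the order of the \textbf{for} loop of Algorithm~\ref{Alg2}, each step selects for the current flow the time $t$ that minimizes the conditional expectation, so by Lemma~\ref{lem:lem2} the sequence of values $E_{\mathcal{P},x}[\sum_\ell w_\ell C_\ell]$ is non-increasing. When $\mathcal{P}=\mathcal{F}$ all random choices are fixed, the conditional expectation becomes deterministic, and, since $C(i,j,k,t)$ upper-bounds the actual flow completion time, it is at least the objective value produced by Algorithm~\ref{Alg2}. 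Hence the algorithm's value is at most the initial quantity $E_{\emptyset,x}[\sum_\ell w_\ell C_\ell]=\sum_k w_k\max_{(i,j,k)\in\mathcal{F}_k}\{\sum_{t=r_k}^{T}\frac{y_{ijkt}}{d_{ijk}}C(i,j,k,t)\}$, and it remains to bound this starting value.

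For the release-time case I would re-run the estimate behind Lemma~\ref{lem:lem1}, tracking the single change caused by the deterministic rule $(i',j',k')<(i,j,k)$. Constraints~(\ref{coflow:b}) and~(\ref{coflow:c}) bound the work transmitted strictly before $t$ on each of ports $i$ and $j$ by $t$; the difference from the randomized analysis is only that the deterministic tie term at slot $t$ may contribute the full $\sum y_{\cdot t}\le 1$ instead of one half of it, so each port now contributes at most $t+1$ rather than $t+\frac{1}{2}$. Plugging this into $\sum_t\frac{y_{ijkt}}{d_{ijk}}C(i,j,k,t)$ and using $\sum_t y_{ijkt}=d_{ijk}$ together with the definition of $C_{ijk}$ in~(\ref{coflow:d2}), the bound evaluates to $3C_{ijk}+\frac{1}{2}(1-d_{ijk})$. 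Because data arrives in integer units, $d_{ijk}\ge 1$, the additive term is non-positive and the bound is at most $3C_{ijk}\le 3C_k^*$ by~(\ref{coflow:d}). Summing over $k$ with nonnegative weights and using that the LP optimum lower-bounds the integral optimum gives the performance guarantee $3$.

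For zero release times the idle-free window in the argument of Lemma~\ref{lem:lem1} starts at $\tau=0$, so the leading term $t$ is dropped and Algorithm~\ref{Alg2} tracks $C(i,j,k,t)-t$; the same computation then yields $\sum_t\frac{y_{ijkt}}{d_{ijk}}(C(i,j,k,t)-t)\le 2C_{ijk}+1\le 2C_k^*+1$. Summing gives a value at most $2\sum_k w_k C_k^*+\sum_k w_k$. The key step is to turn the additive $\sum_k w_k$ into a multiplicative $1/LB$: in any feasible schedule every coflow $k$ satisfies $C_k\ge\max\{\max_i L_{ik},\max_j L_{jk}\}\ge LB$, so $\mathrm{OPT}\ge LB\sum_k w_k$ and therefore $\sum_k w_k\le \mathrm{OPT}/LB$. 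Combining with $\sum_k w_k C_k^*\le\mathrm{OPT}$ yields a value at most $(2+\frac{1}{LB})\mathrm{OPT}$, as claimed.

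Finally, for the running time I would observe that the \textbf{for} loop runs $|\mathcal{F}|$ times, each flow tries at most $T+1$ candidate times, and each conditional expectation is a sum over the $y$-variables, so the total work is polynomial in the number of variables of~(\ref{coflow:main}). The main obstacle I anticipate is the quantitative bookkeeping of the tie-breaking penalty: one must verify that the $\frac{1}{2}\to 1$ change produces exactly an additive $+1$ per coflow (and that the integrality $d_{ijk}\ge 1$ absorbs it in the release-time case), and then convert that additive error into the multiplicative factor $1/LB$ through the global inequality $\mathrm{OPT}\ge LB\sum_k w_k$. A secondary subtlety is to confirm that the tracked quantity in~(\ref{eq1}) is a legitimate upper bound on the expected objective, so that the convex-combination step in Lemma~\ref{lem:lem2} is justified throughout the derandomization.
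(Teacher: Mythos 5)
Your proposal is correct and follows essentially the same route as the paper: derandomization by inductive application of Lemma~\ref{lem:lem2}, with the starting value bounded by rerunning the Lemma~\ref{lem:lem1} estimate under deterministic tie-breaking (each port contributing $t+1$ instead of $t+\tfrac12$), the integrality $d_{ijk}\geq 1$ absorbing the extra $\tfrac12$ in the release-time case, and the lower bound $LB$ converting the additive $+1$ into the factor $\tfrac{1}{LB}$ in the zero-release case. The only cosmetic difference is that you apply the $LB$ bound globally via $\mathrm{OPT}\geq LB\sum_k w_k$, whereas the paper applies $C_k^*\geq LB_k\geq LB$ per coflow; both are valid.
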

\begin{proof}
We consider an arbitrary but fixed coflow $k\in \mathcal{K}$ and denote the time $t$ to which flow $(i, j, k)$ has been assigned by $t$.
When breaking ties with smaller indices, we have 
\begin{eqnarray*}\label{thm3:eq1}
E_{t}[C_{k}]  & \leq & t + d_{ijk} +2\left(t+1\right) \\
							& \leq & 3 \left(t+\frac{1}{2}\right)+ d_{ijk}+\frac{1}{2}
\end{eqnarray*}
and 
\begin{eqnarray*}\label{thm3:eq2}
E[C_{k}] & =     & \sum_{t=r_{k}}^{T} \frac{y_{ijkt}}{d_{ijk}}E_{t}[C_{k}]  \\
         & \leq  & 3\sum_{t=r_{k}}^{T} \left(\frac{y_{ijkt}}{d_{ijk}}\left(t+\frac{1}{2}\right) + \frac{1}{2} y_{ijkt}\right) \\
				 & \leq  & 3 C_{k}^*
\end{eqnarray*}
for the coflow scheduling problem with release times. We also have 
\begin{eqnarray*}\label{thm3:eq3}
E_{t}[C_{k}]  & \leq & d_{ijk} + 2\left(t+1\right) \\
							& \leq & 2 \left(t+\frac{1}{2}\right)+ d_{ijk}+1
\end{eqnarray*}
and 
\begin{eqnarray*}\label{thm3:eq4}
E[C_{k}] & =     & \sum_{t=r_{k}}^{T} \frac{y_{ijkt}}{d_{ijk}}E_{t}[C_{k}]  \\
         & \leq  & 2\sum_{t=r_{k}}^{T} \left(\frac{y_{ijkt}}{d_{ijk}}\left(t+\frac{1}{2}\right) + \frac{1}{2} y_{ijkt}\right)+1 \\
				 & \leq  & 2 C_{k}^*+1 \\
				 & \leq  & \left(2+\frac{1}{LB_{k}}\right)C_{k}^*
\end{eqnarray*}
for the coflow scheduling problem without release times where $LB_{k}=\max\left\{\max_{i}\left\{L_{ik}\right\},\max_{j}\left\{L_{jk}\right\}\right\}$ is a lower bound of $C_{k}^*$.
Subsequently, the result is derived through the inductive application of Lemma~\ref{lem:lem2}. 

The calculation of equation (\ref{eq1}) is polynomially bounded by the number of variables in linear programming (\ref{coflow:main}). Consequently, the running time for each iteration of the number of flows in lines~\ref{alg2-5}-\ref{alg2-3} of Algorithm~\ref{Alg1} is also polynomially bounded by this number.
\end{proof}

\section{Interval-indexed Linear Programming Relaxations}\label{sec:Algorithm3}
This section introduces interval-indexed linear programming to address the issue of excessive variable quantities in linear programming (\ref{coflow:main}). We initially propose a randomized approximation algorithm. This approach can also be modified into a deterministic approximation algorithm, as discussed in section~\ref{sec:Algorithm2}, while still achieving the same approximation ratios.

For a given positive parameter $\eta$, we select the integer $L$ as the smallest value satisfying the inequality $(1+\eta)^L\geq T+1$. As a result, the value of $L$ is polynomially bounded in the input size of the scheduling problem under consideration. Let $\mathcal{L}=\left\{0, 1, \ldots, L\right\}$ be the set of time interval indices. Consider the interval $I_{0}=[0, 1]$, and for $1\leq \ell \leq L$, define $I_{\ell}=((1+\eta)^{\ell-1},(1+\eta)^{\ell}]$. The length of the $\ell$th interval, denoted as $|I_{\ell}|$, is given by $|I_{\ell}|=\eta(1+\eta)^{\ell-1}$ for $1\leq \ell \leq L$. We introduce variables $y_{ijk\ell}$ for $(i, j, k)\in \mathcal{F}$, subject to the constraint $(1+\eta)^{\ell-1}\geq r_{k}$. The expression $y_{ijk\ell}\cdot |I_{\ell}|$ represents the time during which flow $(i, j, k)$ is transmitted within the time interval $I_{\ell}$. Equivalently, $(y_{ijk\ell}\cdot |I_{\ell}|)/d_{ijk}$ denotes the fraction of flow $(i, j, k)$ transmitted within $I_{\ell}$. Consider the following linear program in these interval-indexed variables:

\begin{subequations}\label{coflow:interval}
\begin{align}
& \text{min}  && \sum_{k \in \mathcal{K}} w_{k} C_{k}     &   & \tag{\ref{coflow:interval}} \\
& \text{s.t.} && \sum_{\substack{\ell=0\\ (1+\eta)^{\ell-1}\geq r_{k}}}^{L} \frac{y_{ijk\ell}\cdot |I_{\ell}|}{d_{ijk}} = 1, && \forall (i, j, k)\in \mathcal{F} \label{interval:a} \\
&  && \sum_{(i, j, k) \in \mathcal{F}_{i}} y_{ijk\ell} \leq 1, && \forall i\in \mathcal{I}, \forall \ell\in \mathcal{L} \label{interval:b} \\
&  && \sum_{(i, j, k) \in \mathcal{F}_{j}} y_{ijk\ell} \leq 1, && \forall j\in \mathcal{J}, \forall \ell\in \mathcal{L} \label{interval:c} \\
&  && C_{k}\geq C_{ijk}, && \forall k\in \mathcal{K}, \forall (i, j, k)\in \mathcal{F}_{k} \label{interval:d} \\
&  && y_{ijk\ell} \geq 0,&& \forall (i, j, k)\in \mathcal{F}, \notag\\
&  &&                    && \forall \ell\in \mathcal{L}\label{interval:g}
\end{align}
\end{subequations}
where
\begin{eqnarray}\label{interval:d2}
C_{ijk} = \sum_{\substack{\ell=0\\ (1+\eta)^{\ell-1}\geq r_{k}}}^{L}\left(\frac{y_{ijk\ell}\cdot |I_{\ell}|}{d_{ijk}}(1+\eta)^{\ell-1}+\frac{1}{2}y_{ijk\ell}\cdot |I_{\ell}|\right).
\end{eqnarray}
To simplify the notation of equation (\ref{interval:d2}), we define the notation $(1+\eta)^{\ell-1}$ to be $1/2$ for $\ell=0$ in the term $\frac{y_{ijk\ell}\cdot |I_{\ell}|}{d_{ijk}}(1+\eta)^{\ell-1}$ of the equation. Constraint (\ref{interval:a}) ensures the fulfillment of the transmitting requirements for each flow. The switch capacity constraints (\ref{interval:b}) and (\ref{interval:c}) stipulate that the switch can transmit only one flow at a time in each input port $i$ and output port $j$. Constraint (\ref{interval:d}) specifies that the completion time of coflow $k$ is constrained by the completion times of all its constituent flows. Additionally, the equation (\ref{interval:d2}) represents the lower bound of the completion time for flow $(i, j, k)$, and this lower bound occurs during continuous transmission between $C_{ijk}-d_{ijk}$ and $C_{ijk}$.

\begin{algorithm}
\caption{Randomized Interval-Indexed Coflow Scheduling}
    \begin{algorithmic}[1]
				\STATE Compute an optimal solution $y$ to linear programming (\ref{coflow:interval}). \label{alg3-2}
				\STATE For all flows $(i, j, k)\in \mathcal{F}$ assign flow $(i, j, k)$ to interval $I_{\ell}$, where the interval $I_{\ell}$ is chosen from the probability distribution that assigns flow $(i, j, k)$ to $I_{\ell}$ with probability $\frac{y_{ijk\ell}\cdot |I_{\ell}|}{d_{ijk}}$; set $t_{ijk}$ to the left endpoint of time interval $I_{\ell}$. \label{alg3-3}
		    \STATE wait until the first coflow is released \label{alg3-4}
				\WHILE{there is some incomplete flow}
            \FOR{every released and incomplete flow $(i, j, k)\in \mathcal{F}$ in non-decreasing order of $t_{ijk}$, breaking ties independently at random}
								\IF{the link $(i, j)$ is idle}
								    \STATE schedule flow $(i, j, k)$\label{alg3-1}
								\ENDIF
						\ENDFOR
						\WHILE{no new flow is completed or released}
						    \STATE transmit the flows that get scheduled in line \ref{alg3-1} at maximum rate 1.
						\ENDWHILE
				\ENDWHILE \label{alg3-5}
   \end{algorithmic}
\label{Alg3}
\end{algorithm}

The randomized interval-indexed coflow scheduling algorithm, presented in Algorithm~\ref{Alg3}, utilizes a list scheduling rule. In lines \ref{alg3-2}-\ref{alg3-3}, we randomly determine the start transmission time for each flow based on the solution to the linear programming (\ref{coflow:interval}). In lines \ref{alg3-4}-\ref{alg3-5}, the flows are scheduled in non-decreasing order of $t_{ijk}$, with ties being broken independently at random.

\begin{thm}\label{thm:thm4}
The expected completion time of each coflow $k$ in the schedule constructed by Algorithm~\ref{Alg3} is at most $3(1+\frac{\eta}{3})C_{k}^{*}$ for the coflow scheduling problem with release times and is at most $2(1+\frac{\eta}{2})C_{k}^{*}$ for the coflow scheduling problem without release times.
\end{thm}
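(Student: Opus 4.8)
The plan is to replay the busy-period analysis of Lemma~\ref{lem:lem1} in the interval-indexed setting, so the overall skeleton is identical and only the bookkeeping of ``time'' changes. I would fix a coflow $k$, let $(i,j,k)$ be its last completed flow so that $C_k=C_{ijk}$, and condition on the event that the randomized rounding step of Algorithm~\ref{Alg3} assigns $(i,j,k)$ to interval $I_\ell$; then $t_{ijk}$ equals the left endpoint $(1+\eta)^{\ell-1}$ of $I_\ell$, where at $\ell=0$ the actual left endpoint is $0$ while the LP representative value from~(\ref{interval:d2}) is $1/2$. As before, let $\tau$ be the start of the maximal busy period ending at $C_k$ and let $\mathcal{P}_i,\mathcal{P}_j$ be the flows occupying ports $i,j$ during $(\tau,C_k]$; by the identical busy-period reasoning of Lemma~\ref{lem:lem1} (the list rule forces $t_{i'j'k'}\le t_{ijk}$ for every such flow, and no moment of $(\tau,C_k]$ leaves both ports idle), this yields $C_k\le t_{ijk}+\sum_{\mathcal{P}_i\cup\mathcal{P}_j}d_{i'j'k'}$.

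Next I would bound the conditional expectation $E_\ell[C_k]$. Separating out $(i,j,k)$ itself contributes $d_{ijk}$, and each competing flow $(i',j',k')$ on port $i$ (resp.\ $j$) enters weighted by the probability $\sum_{\ell'<\ell}\tfrac{y_{i'j'k'\ell'}|I_{\ell'}|}{d_{i'j'k'}}+\tfrac12\tfrac{y_{i'j'k'\ell}|I_\ell|}{d_{i'j'k'}}$ that it precedes $(i,j,k)$, the $\tfrac12$ coming from random tie-breaking within $I_\ell$. Multiplying by $d_{i'j'k'}$ cancels the denominators, and summing over $\mathcal{F}_i$ while invoking the capacity constraint~(\ref{interval:b}) (and~(\ref{interval:c}) for port $j$) leaves $\sum_{\ell'<\ell}|I_{\ell'}|+\tfrac12|I_\ell|$ per port. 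The \emph{crux} of the theorem is the evaluation of this quantity: since the intervals tile $[0,(1+\eta)^{\ell-1}]$ we get $\sum_{\ell'=0}^{\ell-1}|I_{\ell'}|=(1+\eta)^{\ell-1}$, and $\tfrac12|I_\ell|=\tfrac{\eta}{2}(1+\eta)^{\ell-1}$, so each port contributes at most $(1+\tfrac\eta2)(1+\eta)^{\ell-1}$. This is exactly where the multiplicative $(1+\tfrac\eta2)$ overhead relative to the unit-interval argument of Lemma~\ref{lem:lem1} appears, and it is the step I expect to require the most care: in particular checking that it, and the inequality $t_{ijk}\le(1+\eta)^{\ell-1}$, persist at the boundary $\ell=0$, where the left endpoint and the LP representative value disagree.

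Combining, I would obtain $E_\ell[C_k]\le t_{ijk}+d_{ijk}+2(1+\tfrac\eta2)(1+\eta)^{\ell-1}\le(3+\eta)(1+\eta)^{\ell-1}+d_{ijk}$ with release times, and, using $\tau=0$ in the release-free case, $E_\ell[C_k]\le d_{ijk}+2(1+\tfrac\eta2)(1+\eta)^{\ell-1}=(2+\eta)(1+\eta)^{\ell-1}+d_{ijk}$. Finally I would remove the conditioning by averaging with weights $\tfrac{y_{ijk\ell}|I_\ell|}{d_{ijk}}$ and compare against $C_{ijk}$ in~(\ref{interval:d2}), whose two summands are precisely $A:=\sum_\ell\tfrac{y_{ijk\ell}|I_\ell|}{d_{ijk}}(1+\eta)^{\ell-1}$ and $\tfrac12 B$ with $B:=\sum_\ell y_{ijk\ell}|I_\ell|$. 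The averaged bounds read $(3+\eta)A+B$ and $(2+\eta)A+B$; since $\eta>0$ gives $B\le\tfrac{3+\eta}{2}B$ (resp.\ $B\le\tfrac{2+\eta}{2}B$), each collapses to $(3+\eta)(A+\tfrac12 B)=3(1+\tfrac\eta3)C_{ijk}$ (resp.\ $(2+\eta)(A+\tfrac12 B)=2(1+\tfrac\eta2)C_{ijk}$), and constraint~(\ref{interval:d}) gives $C_{ijk}\le C_k^*$, finishing the proof. The only genuinely new ingredients beyond Lemma~\ref{lem:lem1} are the geometric length computation and the absorption of the leftover $B$-term, so that is where I would concentrate the verification.
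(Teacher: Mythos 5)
Your proposal is correct and follows essentially the same route as the paper: condition on the interval $I_\ell$ assigned to the last-finishing flow, bound the busy period by the LP capacity constraints to get the per-port contribution $\sum_{\ell'<\ell}|I_{\ell'}|+\tfrac12|I_\ell|=(1+\tfrac{\eta}{2})(1+\eta)^{\ell-1}$, handle the $\ell=0$ convention separately, average over $\ell$, and absorb the leftover $d_{ijk}$-term into the factor before invoking constraint~(\ref{interval:d}). The only cosmetic difference is that the paper performs this absorption at the level of the conditional expectation $E_{\ell}[C_k]$ rather than after averaging, which is immaterial.
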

\begin{proof}
With similar arguments as in the proof of Lemma~\ref{lem:lem1}, the assignment of flow $(i, j, k)$ to time interval $I_{0}$ constant and establish an upper bound on the conditional expectation $E_{\ell=0}[C_{k}]$:
\begin{eqnarray}\label{thm4:eq3}
E_{\ell=0}[C_{k}]  & \leq & E_{\ell=0}\left[\sum_{(i', j', k')\in \mathcal{P}_{i}\cup \mathcal{P}_{j}} d_{i'j'k'}\right] \notag\\
              & \leq & d_{ijk} +\sum_{(i', j', k')\in \mathcal{P}_{i}\setminus \left\{(i, j, k)\right\}} \frac{1}{2}y_{i'j'k'0}\notag\\
					    &      & +\sum_{(i', j', k')\in\mathcal{P}_{j}\setminus \left\{(i, j, k)\right\}}\frac{1}{2}y_{i'j'k'0} \notag\\
							& \leq & d_{ijk} + 1 \notag\\
							& =    & 3\left(1+\frac{\eta}{3}\right)\left(\frac{d_{ijk} + 1}{3(1+\frac{\eta}{3})}\right) \notag\\
							& \leq & 3\left(1+\frac{\eta}{3}\right)\left(\frac{1}{2}+\frac{1}{2}d_{ijk}\right) \notag\\
							& =    & 3\left(1+\frac{\eta}{3}\right)\left((1+\eta)^{\ell-1}+\frac{1}{2}d_{ijk}\right).
\end{eqnarray}
Recall that the notation $(1+\eta)^{\ell-1}$ is set to $1/2$ for $\ell=0$.
The assignment of flow $(i, j, k)$ to time interval $I_{\ell}$ constant and establish an upper bound on the conditional expectation $E_{\ell}[C_{k}]$:
\begin{eqnarray}\label{thm4:eq4}
E_{\ell}[C_{k}]  & \leq & (1+\eta)^{\ell-1}+E_{\ell}\left[\sum_{(i', j', k')\in \mathcal{P}_{i}\cup \mathcal{P}_{j}} d_{i'j'k'}\right] \notag\\
              & \leq & (1+\eta)^{\ell-1}+d_{ijk}  \notag\\
					    &      & +\sum_{(i', j', k')\in \mathcal{P}_{i}\setminus \left\{(i, j, k)\right\}} d_{i'j'k'} \cdot Pr_{\ell}(i', j', k') \notag\\
					    &      & +\sum_{(i', j', k')\in\mathcal{P}_{j}\setminus \left\{(i, j, k)\right\}} d_{i'j'k'} \cdot Pr_{\ell}(i', j', k') \notag\\
              & \leq & (1+\eta)^{\ell-1}+d_{ijk} +2 \sum_{t=r_{k'}}^{\ell-1}|I_{t}|+|I_{\ell}| \notag\\
							& \leq & 3\left(1+\frac{\eta}{3}\right)(1+\eta)^{\ell-1}+d_{ijk} \notag\\
							& \leq & 3\left(1+\frac{\eta}{3}\right)\left((1+\eta)^{\ell-1}+\frac{1}{2}d_{ijk}\right)
\end{eqnarray}
where $Pr_{\ell}(i', j', k')=\sum_{t=r_{k'}}^{\ell-1}\frac{y_{i'j'k't}\cdot |I_{t}|}{d_{i'j'k'}}+\frac{1}{2}\frac{y_{i'j'k'\ell}\cdot |I_{\ell}|}{d_{i'j'k'}}$.

Finally, applying the formula of total expectation to eliminate conditioning results in inequalities (\ref{thm4:eq3}) and (\ref{thm4:eq4}). We have
\begin{eqnarray*}\label{thm4:eq1}
 E[C_{k}]  & \leq & 3 \left(1+\frac{\eta}{3}\right)   \sum_{\substack{\ell=0\\ (1+\eta)^{\ell-1}\geq r_{k}}}^{L}\left(\frac{y_{ijk\ell}\cdot |I_{\ell}|}{d_{ijk}}(1+\eta)^{\ell-1}+\frac{1}{2}y_{ijk\ell}\cdot |I_{\ell}|\right)
\end{eqnarray*}
for the coflow scheduling problem with release times. We also can obtain
\begin{eqnarray*}\label{thm4:eq2}
E[C_{k}] & \leq  & 2 \left(1+\frac{\eta}{2}\right)  \sum_{\substack{\ell=0\\ (1+\eta)^{\ell-1}\geq r_{k}}}^{L}\left(\frac{y_{ijk\ell}\cdot |I_{\ell}|}{d_{ijk}}(1+\eta)^{\ell-1}+\frac{1}{2}y_{ijk\ell}\cdot |I_{\ell}|\right)
\end{eqnarray*}
for the coflow scheduling problem without release times. This together with constraints (\ref{interval:d}) yields the theorem.
\end{proof}

We can choose $\eta=\epsilon$ for any given positive parameter $\epsilon>0$. Algorithm~\ref{Alg3} achieves expected approximation ratios of $3+\epsilon$ and $2+\epsilon$ in the cases of arbitrary and zero release times, respectively.

\section{Deterministic Approximation Algorithm with Interval-indexed Approach for the Coflow Scheduling Problem}\label{sec:Algorithm4}
This section introduces a deterministic algorithm with a bounded worst-case ratio. Algorithm~\ref{Alg4} is derived from Algorithm~\ref{Alg3}. In lines~\ref{alg4-2}-\ref{alg4-3}, we select a time interval $\ell$ for each flow to minimize the expected total weighted completion time. In line~\ref{alg4-4}, all flows break ties with smaller indices. 

\begin{algorithm}
\caption{Deterministic Interval-Indexed Coflow Scheduling}
    \begin{algorithmic}[1]
				\STATE Compute an optimal solution $y$ to linear programming (\ref{coflow:interval}).
				\STATE Set $\mathcal{P}=\emptyset$; $x=0$; \label{alg4-2}
				\FOR{all $(i, j, k)\in F$} \label{alg4-5}
					\STATE for all possible assignments of $(i, j, k)\in \mathcal{F}\setminus \mathcal{P}$ to times $\ell$ compute $E_{\mathcal{P}\cup \left\{(i, j, k)\right\}, x}[\sum_{q}w_{q} C_{q}]$; 
					\STATE Determine the time that minimizes the conditional expectation $E_{\mathcal{P}\cup \left\{(i, j, k)\right\}, x}[\sum_{q}w_{q} C_{q}]$
					\STATE Set $\mathcal{P}=\mathcal{P}\cup \left\{(i, j, k)\right\}$; $x_{ijk\ell}=1$; set $t_{ijk}$ to the left endpoint of time interval $I_{\ell}$;
				\ENDFOR \label{alg4-3}
		    \STATE wait until the first coflow is released
				\WHILE{there is some incomplete flow}
            \FOR{every released and incomplete flow $(i, j, k)\in \mathcal{F}$ in non-decreasing order of $t_{ijk}$, breaking ties with smaller indices} \label{alg4-4}
								\IF{the link $(i, j)$ is idle}
								    \STATE schedule flow $(i, j, k)$\label{alg4-1}
								\ENDIF
						\ENDFOR
						\WHILE{no new flow is completed or released}
						    \STATE transmit the flows that get scheduled in line \ref{alg4-1} at maximum rate 1.
						\ENDWHILE
				\ENDWHILE
   \end{algorithmic}
\label{Alg4}
\end{algorithm}

Let 
\begin{eqnarray*}
C(i,j,k,0) & = & d_{ijk} +\sum_{(i, j', k')<(i, j, k)} y_{ij'k'0} \\
           &   & +\sum_{(i', j, k')<(i, j, k)} y_{i'jk'0}
\end{eqnarray*}
be conditional expectation completion time of flow $(i, j, k)$ to which flow $(i, j, k)$ has been assigned by $\ell=0$ and let 
\begin{eqnarray*}
C(i,j,k,\ell) & = & (1+\eta)^{\ell-1}+d_{ijk} \\
           &   & +\sum_{(i', j', k')\in \mathcal{F}_{i}\setminus \left\{(i, j, k)\right\}} \sum_{\substack{t=0\\ (1+\eta)^{t-1}\geq r_{k'}}}^{\ell-1}y_{i'j'k't}\cdot |I_{t}| \\
					 &   & +\sum_{(i, j', k')<(i, j, k)} y_{ij'k'\ell}\cdot |I_{\ell}| \\
					 &   & +\sum_{(i', j', k')\in\mathcal{F}_{j}\setminus \left\{(i, j, k)\right\}} \sum_{\substack{t=0\\ (1+\eta)^{t-1}\geq r_{k'}}}^{\ell-1} y_{i'j'k't}\cdot |I_{t}| \\
					 &   & +\sum_{(i', j, k')<(i, j, k)} y_{i'jk'\ell}\cdot |I_{\ell}|
\end{eqnarray*}
be conditional expectation completion time of flow $(i, j, k)$ to which flow $(i, j, k)$ has been assigned by $\ell>0$. The expected completion time of coflow $k$ in the schedule output by Algorithm~\ref{Alg4} is
\begin{eqnarray*}
E[C_{k}] & = & \max_{(i, j, k)\in \mathcal{F}_{k}} \left\{\sum_{\substack{\ell=0\\ (1+\eta)^{\ell-1}\geq r_{k}}}^{L} \frac{y_{ijk\ell}\cdot |I_{\ell}|}{d_{ijk}} C(i,j,k,\ell)\right\}.
\end{eqnarray*}

Let $\mathcal{P}_{i}\subseteq F_{i}$ and $\mathcal{P}_{j}\subseteq F_{j}$ represent subsets of flows that have already been assigned the time interval. 
For each flow $(i', j', k')\in \mathcal{P}_{i} \cup \mathcal{P}_{j}$, let the 0/1-variable $x_{i'j'k'\ell}$ for $(1+\eta)^{\ell-1} \geq r_{k'}$ indicate whether $(i', j', k')$ has been assigned to the time interval $I_{\ell}$ (i.e., $x_{i'j'k'\ell}=1$) or not ($x_{i'j'k'\ell}=0$). This allows us to formulate the following expressions for the conditional expectation of the completion time of $(i, j, k)$. 
Let
\begin{eqnarray*}
D(i,j,k,0) & = & d_{ijk}+\sum_{\substack{(i', j', k')\in \mathcal{P}_{i}\cup  \mathcal{P}_{j}\\ (i', j', k')<(i, j, k)}} x_{i'j'k'0}d_{i'j'k'} \\
					 &   & +\sum_{\substack{(i', j', k')\in \\ (\mathcal{F}_{i}\cup \mathcal{F}_{j})\setminus \left\{\mathcal{P}_{i}\cup \mathcal{P}_{j}\cup (i, j, k)\right\}\\ (i', j', k')<(i, j, k)}} y_{i'j'k'0}
\end{eqnarray*}
be conditional expectation completion time of flow $(i, j, k)$ to which flow $(i, j, k)$ has been assigned by $\ell=0$ when $(i, j, k)\notin \mathcal{P}_{i}\cup \mathcal{P}_{j}$ and let
\begin{eqnarray*}
D(i,j,k,\ell) & = & (1+\eta)^{\ell-1}+d_{ijk} \\
           &   & +\sum_{(i', j', k')\in \mathcal{P}_{i}\cup  \mathcal{P}_{j}} \sum_{\substack{t=0\\ (1+\eta)^{t-1}\geq r_{k'}}}^{\ell-1}x_{i'j'k't}d_{i'j'k'} \\
					 &   & +\sum_{\substack{(i', j', k')\in \mathcal{P}_{i}\cup  \mathcal{P}_{j}\\ (i', j', k')<(i, j, k)}} x_{i'j'k'\ell}d_{i'j'k'} \\
					 &   & +\sum_{\substack{(i', j', k')\in \\ (\mathcal{F}_{i}\cup \mathcal{F}_{j})\setminus \left\{\mathcal{P}_{i}\cup \mathcal{P}_{j}\cup (i, j, k)\right\}}} \sum_{\substack{t=0\\ (1+\eta)^{t-1}\geq r_{k'}}}^{\ell-1} y_{i'j'k't} \cdot |I_{t}|\\
					 &   & +\sum_{\substack{(i', j', k')\in \\ (\mathcal{F}_{i}\cup \mathcal{F}_{j})\setminus \left\{\mathcal{P}_{i}\cup \mathcal{P}_{j}\cup (i, j, k)\right\}\\ (i', j', k')<(i, j, k)}} y_{i'j'k'\ell}\cdot |I_{\ell}|
\end{eqnarray*}
be conditional expectation completion time of flow $(i, j, k)$ to which flow $(i, j, k)$ has been assigned by $\ell>0$ when $(i, j, k)\notin \mathcal{P}_{i}\cup \mathcal{P}_{j}$.

Let
\begin{eqnarray*}
E(i,j,k,0) & = & d_{ijk}+\sum_{\substack{(i', j', k')\in \mathcal{P}_{i}\cup  \mathcal{P}_{j}\\ (i', j', k')<(i, j, k)}} x_{i'j'k'0}d_{i'j'k'} \\
					 &   & +\sum_{\substack{(i', j', k')\in \\(\mathcal{F}_{i}\cup \mathcal{F}_{j})\setminus \left\{\mathcal{P}_{i}\cup \mathcal{P}_{j}\right\} \\ (i', j', k')<(i, j, k)}} y_{i'j'k'0}\cdot |I_{0}|
\end{eqnarray*}
be expectation completion time of flow $(i, j, k)$ to which flow $(i, j, k)$ has been assigned by $\ell=0$ when $(i, j, k)\in \mathcal{P}_{i}\cup \mathcal{P}_{j}$ and let
\begin{eqnarray*}
E(i,j,k,\ell) & = & (1+\eta)^{\ell-1}+d_{ijk} \\
           &   & +\sum_{(i', j', k')\in \mathcal{P}_{i}\cup  \mathcal{P}_{j}} \sum_{\substack{t=0\\ (1+\eta)^{t-1}\geq r_{k'}}}^{\ell-1}x_{i'j'k't}d_{i'j'k'} \\
					 &   & +\sum_{\substack{(i', j', k')\in \mathcal{P}_{i}\cup  \mathcal{P}_{j}\\ (i', j', k')<(i, j, k)}} x_{i'j'k'\ell}d_{i'j'k'} \\
					 &   & +\sum_{\substack{(i', j', k')\in \\(\mathcal{F}_{i}\cup \mathcal{F}_{j})\setminus \left\{\mathcal{P}_{i}\cup \mathcal{P}_{j}\right\}}} \sum_{\substack{t=0\\ (1+\eta)^{t-1}\geq r_{k'}}}^{\ell-1} y_{i'j'k't} \cdot |I_{t}|\\
					 &   & +\sum_{\substack{(i', j', k')\in \\(\mathcal{F}_{i}\cup \mathcal{F}_{j})\setminus \left\{\mathcal{P}_{i}\cup \mathcal{P}_{j}\right\} \\ (i', j', k')<(i, j, k)}} y_{i'j'k'\ell}\cdot |I_{\ell}|
\end{eqnarray*}
be expectation completion time of flow $(i, j, k)$ to which flow $(i, j, k)$ has been assigned by $\ell>0$ when $(i, j, k)\in \mathcal{P}_{i}\cup \mathcal{P}_{j}$.

Let $\mathcal{P}\subseteq F$ represent subsets of flows that have already been assigned the switch-interval pair.
The expected completion time of coflow $k$ is the maximum expected completion time among its flows. We have
\begin{eqnarray*}
E_{\mathcal{P},x}[C_{k}] & = & \max \left\{A, B\right\}
\end{eqnarray*}
where
\begin{eqnarray*}
A & = & \max_{(i, j, k)\in \mathcal{F}_{k}\setminus \mathcal{P}} \left\{\sum_{\substack{\ell=0\\ (1+\eta)^{\ell-1}\geq r_{k}}}^{L} \frac{y_{ijk\ell}\cdot |I_{\ell}|}{d_{ijk}} D(i,j,k,\ell)\right\},
\end{eqnarray*}
\begin{eqnarray}\label{B2}
B & = & \max_{(i, j, k)\in \mathcal{F}_{k}\cap \mathcal{P}} \left\{E(i,j,k,\ell_{ijk}) \right\}.
\end{eqnarray}
In equation~(\ref{B2}), $\ell_{ijk}$ is the $(i, j, k)$ has been assigned to, i.e., $x_{ijk\ell_{ijk}}=1$.

\begin{lem}\label{lem:lem3}
Let $y$ be an optimal solution to linear programming (\ref{coflow:interval}), $\mathcal{P}\subseteq \mathcal{F}$, and let $x$ represent a fixed assignment of the flows in $\mathcal{P}$ to times. Moreover, for $(i, j, k)\in \mathcal{F}\setminus \mathcal{P}$, there exists an assignment of $(i, j, k)$ to time interval $I_{\ell}$ with $r_{k}\leq (1+\eta)^{\ell-1}$ such that
\begin{eqnarray*}
E_{\mathcal{P}\cup \left\{(i, j, k)\right\}, x}\left[\sum_{q}w_{q} C_{q}\right] \leq E_{\mathcal{P}, x}\left[\sum_{q}w_{q} C_{q}\right].
\end{eqnarray*}
\end{lem}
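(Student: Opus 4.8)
The plan is to replicate the convex-combination argument from the proof of Lemma~\ref{lem:lem2}, now using the interval-indexed probabilities $\frac{y_{ijk\ell}\cdot|I_{\ell}|}{d_{ijk}}$ in place of $\frac{y_{ijkt}}{d_{ijk}}$. The essential observation is that, conditioned on the partial assignment $(\mathcal{P}, x)$, the next flow $(i,j,k)\in\mathcal{F}\setminus\mathcal{P}$ is still treated as random in the expression $E_{\mathcal{P},x}[\sum_{q} w_{q} C_{q}]$, drawing its interval $I_{\ell}$ with probability exactly $\frac{y_{ijk\ell}\cdot|I_{\ell}|}{d_{ijk}}$. By constraint (\ref{interval:a}), restricted to the indices $\ell$ with $(1+\eta)^{\ell-1}\geq r_{k}$, these probabilities sum to one, so they define a genuine probability distribution supported only on intervals consistent with the release time $r_{k}$.

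First I would decompose $E_{\mathcal{P},x}[\sum_{q} w_{q} C_{q}]$ by the law of total expectation, conditioning on the interval to which $(i,j,k)$ is assigned. Writing $x^{(\ell)}$ for the extension of $x$ that additionally sets $x_{ijk\ell}=1$, this yields the identity
\begin{eqnarray*}
E_{\mathcal{P},x}\left[\sum_{q} w_{q} C_{q}\right] = \sum_{\substack{\ell=0\\ (1+\eta)^{\ell-1}\geq r_{k}}}^{L} \frac{y_{ijk\ell}\cdot|I_{\ell}|}{d_{ijk}}\, E_{\mathcal{P}\cup\left\{(i,j,k)\right\}, x^{(\ell)}}\left[\sum_{q} w_{q} C_{q}\right].
\end{eqnarray*}
The verification of this identity is the main step, since fixing the interval of $(i,j,k)$ alters the conditional completion-time expectations not only of coflow $k$ but of every coflow containing a flow that shares port $i$ or port $j$ with $(i,j,k)$. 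Here I would invoke the pairwise-independence assumption to argue that the conditional expectation of each $C_{q}$ is linear in the indicator of the event that $(i,j,k)$ is placed in $I_{\ell}$, so that averaging the post-assignment quantities $D(\cdot)$ and $E(\cdot)$ against the weights $\frac{y_{ijk\ell}\cdot|I_{\ell}|}{d_{ijk}}$ reproduces exactly the pre-assignment expression defining $E_{\mathcal{P},x}[\cdot]$.

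Once the identity is established, the right-hand side is a convex combination, so the minimum of its summands is at most the weighted average; hence there exists at least one interval index $\ell$ with $(1+\eta)^{\ell-1}\geq r_{k}$ for which
\begin{eqnarray*}
E_{\mathcal{P}\cup\left\{(i,j,k)\right\}, x^{(\ell)}}\left[\sum_{q} w_{q} C_{q}\right] \leq E_{\mathcal{P},x}\left[\sum_{q} w_{q} C_{q}\right],
\end{eqnarray*}
which is precisely the claimed assignment. The hardest part is the careful bookkeeping behind the total-expectation identity: one must check that the port-$i$ and port-$j$ contributions of $(i,j,k)$ to the completion times of the already-assigned flows (captured through $E(\cdot)$) and of the not-yet-assigned flows (captured through $D(\cdot)$) combine so that conditioning on the interval of $(i,j,k)$ and re-averaging is an exact identity rather than merely an inequality. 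Because the weights are taken directly from the feasibility constraint (\ref{interval:a}), the release-time restriction $r_{k}\leq(1+\eta)^{\ell-1}$ is automatically inherited by the selected interval, and the remainder of the argument is identical to that of Lemma~\ref{lem:lem2}.
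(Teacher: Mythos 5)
Your proposal is correct and follows essentially the same route as the paper: the paper's proof is exactly the convex-combination argument, writing $E_{\mathcal{P},x}[\sum_q w_q C_q]$ as a weighted average of the conditional expectations over all admissible intervals with coefficients $\frac{y_{ijk\ell}\cdot|I_\ell|}{d_{ijk}}$ and choosing the minimizing interval. You merely spell out the total-expectation identity and the role of constraint (\ref{interval:a}) in more detail than the paper, which states the decomposition without elaboration.
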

\begin{proof}
The expression for the conditional expectation, $E_{\mathcal{P}, x}[\sum_{q}w_{q} C_{q}]$, can be expressed as a convex combination of conditional expectations $E_{\mathcal{P}\cup \left\{(i, j, k)\right\}, x}[\sum_{q}w_{q} C_{q}]$ across all possible assignments of flow $(i, j, k)$ to time interval $I_{\ell}$, where the coefficients are given by $\frac{y_{ijk\ell}\cdot |I_{\ell}|}{d_{ijk}}$. The optimal combination is determined by the condition $E_{\mathcal{P}\cup \left\{(i, j, k)\right\}, x}[\sum_{q}w_{q} C_{q}] \leq E_{\mathcal{P}, x}[\sum_{q}w_{q} C_{q}]$, eventually the claimed result.			
\end{proof}

\begin{thm}\label{thm:thm5}
Algorithm~\ref{Alg4} is a deterministic algorithm with performance guarantee $3+\epsilon$ for the coflow scheduling problem with release times and with performance guarantee $2+\epsilon$ for the coflow scheduling problem without release times.
\end{thm}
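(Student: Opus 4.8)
The plan is to derandomize Algorithm~\ref{Alg3} by the method of conditional probabilities, exactly as Theorem~\ref{thm:thm3} derandomizes Algorithm~\ref{Alg1}, but carrying out the estimates in the interval-indexed setting of Theorem~\ref{thm:thm4} rather than in the unit-time-slot setting. Concretely, I would fix a coflow $k$ together with the flow $(i,j,k)\in\mathcal{F}_k$ that finishes last, and bound the conditional completion time $C(i,j,k,\ell)$ obtained when $(i,j,k)$ is placed in interval $I_\ell$ and all ties are resolved by the index order $(i',j',k')<(i,j,k)$. The only structural change relative to the randomized analysis of Theorem~\ref{thm:thm4} is that index-based tie-breaking is worst-case deterministic, so at level $\ell$ each port contributes the full interval length $|I_\ell|$ instead of the $\tfrac12|I_\ell|$ that the random tie-breaking of Theorem~\ref{thm:thm4} produced; the base interval $\ell=0$ is handled separately but in complete parallel with Theorem~\ref{thm:thm4}.

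The main computation is the per-interval bound. Using the capacity constraints (\ref{interval:b}) and (\ref{interval:c}) summed over the intervals $0,\dots,\ell-1$, together with the telescoping identity $\sum_{t=0}^{\ell-1}|I_t|=(1+\eta)^{\ell-1}$, the contribution of each port from earlier intervals is at most $(1+\eta)^{\ell-1}$, and the tie contribution at interval $\ell$, namely $\sum_{(i,j',k')<(i,j,k)}y_{ij'k'\ell}\,|I_\ell|$, is at most $|I_\ell|=\eta(1+\eta)^{\ell-1}$. Collecting the start term, the self term $d_{ijk}$, and both ports gives
\begin{eqnarray*}
C(i,j,k,\ell) &\leq& (1+\eta)^{\ell-1}+d_{ijk}+2\big((1+\eta)^{\ell-1}+|I_\ell|\big)\\
&=&(3+2\eta)(1+\eta)^{\ell-1}+d_{ijk}\\
&\leq&(3+2\eta)\left((1+\eta)^{\ell-1}+\tfrac12 d_{ijk}\right),
\end{eqnarray*}
where the last step uses $(3+2\eta)/2\geq1$; dropping the leading start term $(1+\eta)^{\ell-1}$ in the zero-release case yields the analogous bound with factor $2+2\eta$ (valid since $(2+2\eta)/2\geq1$). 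Averaging these termwise bounds with the weights $y_{ijk\ell}|I_\ell|/d_{ijk}$ and recognizing the resulting right-hand side as $C_{ijk}$ from (\ref{interval:d2}), I would obtain $E[C_k]\leq(3+2\eta)C_k^*$ with release times and $E[C_k]\leq(2+2\eta)C_k^*$ without, invoking constraint (\ref{interval:d}) so that the bound survives the maximum over flows of the coflow.

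Finally, I would run the inductive derandomization exactly as in Theorem~\ref{thm:thm3}: starting from $\mathcal{P}=\emptyset$, where $E_{\emptyset}[\sum_q w_q C_q]$ is bounded by $(3+2\eta)\sum_k w_k C_k^*$ (resp.\ $(2+2\eta)\sum_k w_k C_k^*$) by the display above, repeated application of Lemma~\ref{lem:lem3} fixes one flow at a time without ever increasing the conditional expectation, until at $\mathcal{P}=\mathcal{F}$ it equals the deterministic objective of the produced schedule. Since $\sum_k w_k C_k^*$ lower-bounds the optimum, this yields ratios $3+2\eta$ and $2+2\eta$, and choosing $\eta=\epsilon/2$ delivers $3+\epsilon$ and $2+\epsilon$. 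The step I expect to require the most care is confirming that index-based tie-breaking costs only the full $|I_\ell|$ per port—no more—even when an entire interval's worth of tied flows may precede $(i,j,k)$, i.e.\ that the single-interval capacity constraint alone controls the tie sum; this is the interval analogue of the ``$+1$'' tie term in Theorem~\ref{thm:thm3} and is precisely what separates the deterministic factor $3+2\eta$ here from the randomized factor $3+\eta$ of Theorem~\ref{thm:thm4}.
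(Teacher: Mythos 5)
Your proposal is correct and follows essentially the same route as the paper's own proof: the per-interval bound $(1+\eta)^{\ell-1}+d_{ijk}+2\left((1+\eta)^{\ell-1}+|I_\ell|\right)=(3+2\eta)(1+\eta)^{\ell-1}+d_{ijk}$ is exactly the paper's $3\left(1+\frac{2\eta}{3}\right)(1+\eta)^{\ell-1}+d_{ijk}$, the full-$|I_\ell|$ tie term from index-based tie-breaking is precisely what the paper charges via $Pr_{\ell}(i',j',k')=\sum_{t=r_{k'}}^{\ell}\frac{y_{i'j'k't}\cdot|I_t|}{d_{i'j'k'}}$, and the derandomization via inductive application of Lemma~\ref{lem:lem3} with $\eta=\epsilon/2$ matches the paper's conclusion. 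No gaps.
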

\begin{proof}

With similar arguments as in the proof of theorem~\ref{thm:thm4}, the assignment of flow $(i, j, k)$ to time interval $I_{0}$ constant and establish an upper bound on the conditional expectation $E_{\ell=0}[C_{k}]$:
\begin{eqnarray}\label{thm5:eq3}
E_{\ell=0}[C_{k}]  & \leq & E_{\ell=0}\left[\sum_{(i', j', k')\in \mathcal{P}_{i}\cup \mathcal{P}_{j}} d_{i'j'k'}\right] \notag\\
              & \leq & d_{ijk} +\sum_{(i', j', k')\in \mathcal{P}_{i}\setminus \left\{(i, j, k)\right\}} y_{i'j'k'0}\notag\\
					    &      & +\sum_{(i', j', k')\in\mathcal{P}_{j}\setminus \left\{(i, j, k)\right\}}y_{i'j'k'0} \notag\\
							& \leq & d_{ijk} + 2 \notag\\
							& =    & 3\left(1+\frac{2\eta}{3}\right)\left(\frac{d_{ijk} + 2}{3(1+\frac{2\eta}{3})}\right) \notag\\
							& \leq & 3\left(1+\frac{2\eta}{3}\right)\left(\frac{1}{2}+\frac{1}{2}d_{ijk}\right) \notag\\
							& =    & 3\left(1+\frac{2\eta}{3}\right)\left((1+\eta)^{\ell-1}+\frac{1}{2}d_{ijk}\right).
\end{eqnarray}
The assignment of flow $(i, j, k)$ to time interval $I_{\ell}$ constant and establish an upper bound on the conditional expectation $E_{\ell}[C_{k}]$:
\begin{eqnarray}\label{thm5:eq4}
E_{\ell}[C_{k}]  & \leq & (1+\eta)^{\ell-1}+E_{\ell}\left[\sum_{(i', j', k')\in \mathcal{P}_{i}\cup \mathcal{P}_{j}} d_{i'j'k'}\right] \notag\\
              & \leq & (1+\eta)^{\ell-1}+d_{ijk}  \notag\\
					    &      & +\sum_{(i', j', k')\in \mathcal{P}_{i}\setminus \left\{(i, j, k)\right\}} d_{i'j'k'} \cdot Pr_{\ell}(i', j', k') \notag\\
					    &      & +\sum_{(i', j', k')\in\mathcal{P}_{j}\setminus \left\{(i, j, k)\right\}} d_{i'j'k'} \cdot Pr_{\ell}(i', j', k') \notag\\
              & \leq & (1+\eta)^{\ell-1}+d_{ijk} +2 \sum_{t=r_{k'}}^{\ell}|I_{t}| \notag\\
							& \leq & 3\left(1+\frac{2\eta}{3}\right)(1+\eta)^{\ell-1}+d_{ijk} \notag\\
							& \leq & 3\left(1+\frac{2\eta}{3}\right)\left((1+\eta)^{\ell-1}+\frac{1}{2}d_{ijk}\right)
\end{eqnarray}
where $Pr_{\ell}(i', j', k')=\sum_{t=r_{k'}}^{\ell}\frac{y_{i'j'k't}\cdot |I_{t}|}{d_{i'j'k'}}$.

Finally, applying the formula of total expectation to eliminate conditioning results in inequalities (\ref{thm5:eq3}) and (\ref{thm5:eq4}). We have
\begin{eqnarray*}\label{thm5:eq1}
 E[C_{k}]  & \leq & 3 \left(1+\frac{2\eta}{3}\right)  \sum_{\substack{\ell=0\\ (1+\eta)^{\ell-1}\geq r_{k}}}^{L}\left(\frac{y_{ijk\ell}\cdot |I_{\ell}|}{d_{ijk}}(1+\eta)^{\ell-1}+\frac{1}{2}y_{ijk\ell}\cdot |I_{\ell}|\right)
\end{eqnarray*}
for the coflow scheduling problem with release times. We also can obtain
\begin{eqnarray*}\label{thm5:eq2}
E[C_{k}] & \leq  & 2 \left(1+\eta\right) \sum_{\substack{\ell=0\\ (1+\eta)^{\ell-1}\geq r_{k}}}^{L}\left(\frac{y_{ijk\ell}\cdot |I_{\ell}|}{d_{ijk}}(1+\eta)^{\ell-1}+\frac{1}{2}y_{ijk\ell}\cdot |I_{\ell}|\right)
\end{eqnarray*}
for the coflow scheduling problem without release times. When $\eta=\epsilon/2$ and $\epsilon>0$, this together with constraints (\ref{interval:d}) and inductive application of Lemma~\ref{lem:lem3} yields the theorem.
\end{proof}

\section{Concluding Remarks}\label{sec:Conclusion}
This paper investigates the scheduling problem of coflows with release times, aiming to minimize the total weighted completion time. Existing literature has predominantly concentrated on establishing the scheduling order of coflows. We advance this research by enhancing performance through the determination of the flow scheduling order. Our approximation algorithm achieves approximation ratios of $3$ and $2+\frac{1}{LB}$ for arbitrary and zero release times, respectively, where $LB$ is the minimum lower bound of coflow completion time. To further enhance time complexity, we reduce the number of variables in linear programming. Consequently, our improved approximation algorithm achieves approximation ratios of $3 + \epsilon$ and $2 + \epsilon$ for arbitrary and zero release times, respectively.

\end{document}